\documentclass[%
twocolumn,
 amsmath,amssymb,
 aps, physrev,
superscriptaddress
]{revtex4-2}

\usepackage[utf8]{inputenc}
\usepackage[T1]{fontenc}
\usepackage{microtype}
\usepackage{lmodern}
\usepackage{textcomp}
\usepackage{amssymb,amsthm}
\usepackage{mathtools}

\newtheorem{lemma}{Lemma}[section]

\usepackage{graphicx}% Include figure files
\usepackage{color}

\usepackage{siunitx}
\usepackage{hyperref}
\usepackage{orcidlink}

\begin{document}

%\preprint{APS/123-QED}

\title{\textbf{Big Crunch on Julia and Anti-Julia Sets in the Integrable Limit} 
}% 

\author{Masatomi Iizawa\,\orcidlink{0000-0002-3735-0616}}
\email{Contact author: masatomi.iizawa@tu-braunschweig.de}
\affiliation{Institut f\"ur Theoretische Physik, Technische Universit\"at Braunschweig, Mendelssohnstr. 3, D-38106 Braunschweig, Germany}
% https://orcid.org/0000-0002-3735-0616

\author{Satoru Saito\,\orcidlink{0009-0000-7676-5032}}
\email{saito_ru@nifty.com}
\affiliation{Liberality Research, 5-22-6 Matsubara, Setagaya, Tokyo 156-0043, Japan}
% https://orcid.org/0009-0000-7676-5032

\author{Yasuhito Narita\,\orcidlink{0000-0002-5332-8881}}%
\email{y.narita@tu-braunschweig.de}
\affiliation{Institut f\"ur Theoretische Physik, Technische Universit\"at Braunschweig, Mendelssohnstr. 3, D-38106 Braunschweig, Germany}
\affiliation{Max Planck Institute for Solar System Research, Justus-von-Liebig-Weg 3, D-37077 G\"ottingen, Germany}
% https://orcid.org/0000-0002-5332-8881

\author{Yasuyuki Maeda\,\orcidlink{0009-0000-4187-9945}}
\affiliation{Liberality Research, 5-22-6 Matsubara, Setagaya, Tokyo 156-0043, Japan}
% https://orcid.org/0009-0000-4187-9945

\author{Hiromitsu Harada\,\orcidlink{0009-0006-5467-7131}}
\affiliation{Liberality Research, 5-22-6 Matsubara, Setagaya, Tokyo 156-0043, Japan}
% https://orcid.org/0009-0006-5467-7131

\author{Akira Shudo\,\orcidlink{0000-0001-9443-9054}}
\email{shudo@tmu.ac.jp}
\affiliation{Department of Physics, Faculty of Science, Tokyo Metropolitan University, 1-1 Minami-Osawa, Hachioji-shi, Tokyo 192-0397, Japan}
% https://orcid.org/0000-0001-9443-9054

% \date{\today}% It is always \today, today,
             %  but any date may be explicitly specified

\begin{abstract}
The Julia set is defined by the closure of repelling periodic points in chaotic systems. Why does this structure not appear in integrable systems? In this paper, we address this question by demonstrating the existence of the closure of divergences of the periodic equations, which we designate as the ``anti-Julia set.'' We also call the sets before taking the closures the pre-Julia and pre-anti-Julia sets, respectively. We illustrate the transition mechanism by considering a complex map that interpolates between integrable and non-integrable dynamics, by introducing a real deformation parameter $a$. 
For $0<a\le 1/2$, we show that the Julia set and the anti-Julia set coincide in the complex plane, although the  pre-Julia and pre-anti-Julia sets  remain completely disjoint. At the integrable limit $a\to 0$, these two dual structures undergo a critical collision and subsequent annihilation, reminiscent of a cosmological \textit{Big Crunch}. When $1/2<a<1$, on the other hand,  the boundary of the pre-anti-Julia set includes the Julia set. We analytically characterize these phenomena, focusing in particular on the asymptotic behavior of the pre-anti-Julia set as it approaches the integrable limit, and provide numerical visualizations that elucidate the underlying mechanisms of this \textit{Big Crunch} phenomenon.
\end{abstract}

%\keywords{Suggested keywords}%Use showkeys class option if keyword
                              %display desired
\maketitle

%\tableofcontents

%%
%% sec. 1
%%

\section{Introduction}\label{sec:introduction}
The study of nonlinear dynamics has long been polarized into two distinct paradigms: integrable systems and chaotic systems. The community surrounding integrable systems has focused on uncovering hidden symmetries, constructing exact solutions, and analyzing preserved structures. This field was catalyzed by the discovery of solitons in the Korteweg--de Vries (KdV) equation by Zabusky and Kruskal (1965)~\cite{Zabusky1965}, which led to the development of the Inverse Scattering Method by Gardner, Greene, Kruskal, and Miura (1967)~\cite{Gardner1967} (see a review by Zabusky~\cite{Zabusky2005}). Subsequent algebraic formulations, initiated by Sato’s discovery of the infinite-dimensional Grassmann manifold structure underlying soliton equations~\cite{Sato1981, Sato1983} and further developed through the introduction of $\tau$-functions by the Kyoto school~\cite{Jimbo1981_PhysD_1,Jimbo1981_PhyD_2,Jimbo1981_PhyD_3}, elevated the field into a framework of mathematical physics based on infinite-dimensional Lie algebras~\cite{Kashiwara1981,Date1981_PJA,Date1981_JPSJ1,Date1982_PhysD,Date1982_RIMS1,Date1981_JPSJ2,Date1982_RIMS2}.

Conversely, the chaos and pattern-formation community has investigated predictability limits, statistical properties, and geometric structures in non-integrable systems. This paradigm was initiated by Turing's seminal work on reaction--diffusion systems (1952)~\cite{Turing1952} and Lorenz's identification of deterministic non-periodic flow (1963)~\cite{Lorenz1963}. The field was further conceptualized by Ruelle and Takens (1971)~\cite{Ruelle1971}, who introduced the concept of strange attractors in fluid turbulence. The discovery of universal scaling behavior in period-doubling bifurcations by Feigenbaum (1978)~\cite{Feigenbaum1978} established that deterministic chaos possesses structural universality independent of the specific system details. Since the notion of J-stability was introduced by Ma\~n\'e \textit{et al.} (1983)~\cite{Mane1983}, much effort has been devoted to clarifying the structural stability of the Julia set, and this line of research continues today (e.g. Astorg \textit{et al.} (2021--2023)~\cite{Astorg2021}).

Because of these differences in mathematical machinery (algebraic and analytical for integrability, versus geometric and topological for chaos), these two fields have long been pursued by largely isolated research communities. Investigations into near-integrable systems, such as the Kolmogorov-Arnold-Moser (KAM) theory~\cite{Kolmogorov1954,Kolmogorov1979,Arnold1963_ru,Arnold1963,Moser1962} (see also well-known text~\cite{Lichtenberg1992, Arnold2006}), have explored this boundary. The KAM theory establishes the local persistence of integrable structures under small perturbations. However, these studies focus primarily on Hamiltonian systems and are limited to the local behavior near the integrable state. They do not describe the global transition from fully developed chaos to complete integrability in general dynamical systems.

Despite this historical divide, significant efforts have been made to bridge the gap between chaos and integrable systems by investigating the continuous transition between these two regimes. The foundation for understanding discrete integrable systems, which act as the counterpart to deterministic chaos, was laid through extensive studies on Hirota's bilinear difference equation (also known as the Hirota--Miwa equation)~\cite{Saito2012}. It has served as a master equation for deriving a wide range of integrable models. With this robust framework of discrete integrability established---and driven by the realization that restricting our focus solely to integrable systems yields an inherently incomplete picture, thereby losing a deeper understanding of integrability itself---investigations turned to interpolating between integrable and non‑integrable regimes from a global perspective within the same unified setting. Initial investigations analyzed a family of rational maps interpolating integrable and non-integrable difference analogues of the logistic equation~\cite{Saitoh1995,Saitoh1996}. These studies observed how specific discretizations preserved integrability (sharing the symmetries of the KP hierarchy) and bypassed the formation of the Julia sets typical of chaotic maps. This breakthrough revealed that the true nature of integrability cannot be fully grasped without exploring this boundary.

Subsequent studies on higher-dimensional rational maps, including the discrete Euler top, uncovered a geometrical distinction regarding periodic points~\cite{Saito2006,Saito2007_JPSJ}. Whereas periodic points in non-integrable maps are generally isolated, periodic points in integrable systems possessing a sufficient number of invariants form continuous algebraic varieties, termed Invariant Varieties of Periodic Points (IVPPs)~\cite{Saito2006,Saito2007_JPSJ}. The relation between IVPPs and recurrence equations was formalized~\cite{Saito2007_JPhysA}, and perturbative studies further revealed how these continuous IVPPs break down into isolated points when integrability is broken~\cite{Saitoh2008}. To elucidate the precise mechanism of the integrable-nonintegrable transition, the \textit{fate of the Julia set} at the integrable limit has been extensively explored~\cite{Saito2010,Saito2013}. Analytical investigations revealed that, as a map approaches integrability, the isolated repelling periodic points (repeller; repulsive periodic points) dense in the Julia set do not merely disappear. Instead, the Julia set approaches a Variety of Singular Points (VSPs) and degenerates into  Indeterminate Points (or Indefinite Points; IDPs) along algebraic curves, acting as singular loci~\cite{Saito2010,Saito2013}. Furthermore, the phenomenon of singularity confinement, which is often used to characterize discrete integrable systems, was shown to be intimately related to the generation of IVPPs and could be interpreted within the elegant framework of projective resolutions in triangulated categories~\cite{Saito2014,Yumibayashi2014}.

While the aforementioned studies extensively investigated the fate of the pre-Julia set defined by the repelling periodic points of the map, the dual structure associated with the divergent points, also called pre-poles of the map remains largely unexplored. Conversely, the map possesses an equal number of divergent points, defined by the poles of the periodic equation, which aggregate into what we designate as the \textit{pre-anti-Julia set}. As a system approaches the integrable limit, these two dual structures, the pre-Julia set and the pre-anti-Julia set, do not evolve independently. Instead, we demonstrate that at the integrable limit, the pre-Julia set and the pre-anti-Julia set(periodic-points and pre-pole families) undergo a critical collision and subsequent annihilation, a dynamic process reminiscent of a cosmological \textit{Big Crunch}. This catastrophic event is not merely a geometric curiosity; it provides a definitive mechanism for the integrable--nonintegrable transition, revealing how chaotic invariant sets degenerate into the singular loci and IDPs that characterize discrete integrable systems. In this paper, we analytically characterize the asymptotic properties of the pre-Julia and pre-anti-Julia sets approaching the integrable limit. Furthermore, we provide numerical visualizations to elucidate the mechanisms of this \textit{Big Crunch} phenomenon.

The remainder of this paper is organized as follows. In Sec.~\ref{sec:duality}, we introduce the non-integrable deformation from the Möbius map to the logistic map. By reviewing the degeneration of the Julia set towards the integrable limit, we establish the necessity of investigating the dual divergent structures. Sec.~\ref{sec:antijulia} formally defines the anti-Julia set via the poles of the periodic equation, detailing its algebraic and geometric structures along with its asymptotic behavior. In Sec.~\ref{sec:bigcrunch}, we highlight the \textit{Big Crunch} phenomenon at the integrable limit. Here, we analytically characterize the critical collision of the dual structures and the remnants of their annihilation, such as singular loci and IDPs, complemented by numerical visualizations. Finally, Sec.~\ref{sec:conclusion} concludes the paper by providing a unified picture of the integrable--nonintegrable transition.

%%
%% sec. 2
%%
\section{Duality in Deformed Integrable Maps: Periodic Points and Pre-Poles}\label{sec:duality}
\subsection{Non-Integrable Deformations from the Möbius Map to the Logistic Map}
Every holomorphic automorphism on the Riemann sphere $\hat{\mathbb{C}} = \mathbb{C} \cup \{\infty\}$ is uniquely given by a Möbius map $z \mapsto \frac{\alpha z+\beta}{\gamma z+\delta}\quad (\alpha \delta-\beta\gamma\neq 0)$, and the automorphism group $\text{Aut}(\hat{\mathbb{C}})$, a set of all Möbius maps, is identical to the entire class of one-dimensional integrable systems. While iterations of a Möbius map preserve generalized circles, non-invertible rational maps of degree two or higher invariably destroy this geometrical simplicity, giving rise to chaotic dynamics. The minimal step to break integrability is thus to introduce a degree-two deformation. Because any quadratic polynomial can be conjugated into the logistic map $z \mapsto \mu z (1-z)$ via a Möbius transformation, the logistic map is the canonical form representing the entire class of degree-two polynomial dynamics.

To study the deformation between the Möbius map and the logistic map via the deformation parameter $a$, we construct a linear homotopy. The resulting deformed map provides a geometric framework to examine the boundary between integrability and chaos. If we demand that this deformed map satisfies the following criteria:
\begin{enumerate}
    \item it reduces to the integrable Möbius map at a deformation parameter $a=0$,
    \item it exactly coincides with the chaotic logistic map at $a=1$, and
    \item it preserves the fixed points at $0$ and $1-1/\mu$, as well as the multiplier $\mu$ at the origin, independently of the parameter $a$,
\end{enumerate}
the resulting rational map is uniquely determined. This map, termed the Deformed Logistic Map (DLM), is given by:
\begin{align}
z \mapsto f_{a} (z) = \mu z \frac{1-az}{1+\mu(1-a)z}. \label{eq:dlm} \end{align}
The DLM was introduced in the context of KP hierarchy by Saitoh et al. (1995)~\cite{Saitoh1995} and subsequently analyzed in detail (1996)~\cite{Saitoh1996}. The uniqueness of the DLM makes it the inevitable choice for investigating the integrable--nonintegrable transition.

\subsection{Degeneration of the Julia Set towards the Integrable Limit}
The Julia set is defined as the closure of the relulsive periodic points of a rational map, which can equivalently be constructed as the closure of all backward iterates starting from an arbitrary repelling periodic point~\cite{Devaney1989}. As long as the deformation parameter $a$ is non-zero, DLM remains non-integrable, and thus the Julia set necessarily exists. Since the periodicity conditions are algebraic functions of the dynamical variables and $a$ for any period $n$, the algebraic nature of the set is preserved across all non-zero values of $a$. For the one-dimensional DLM, it has been proven that, when $a$ is less than the critical value 
$a^*=\frac{\sqrt\mu}{2(1+\sqrt\mu)}$, the Julia set is entirely confined to the real axis, if the bifurcation parameter $\mu$ is real and reduces to a totally disconnected Cantor set with an empty interior~\cite{Saitoh1996}. This topological property provides the constraint necessary to evaluate the asymptotic spatial distribution of the chaotic invariant set just before its degeneration. Furthermore, numerical calculations of the information and box dimensions using modest computational resources (typical of PCs over three decades ago) suggest that these fractal dimensions monotonically decrease as $a\to 0$, pointing to a trend where the Julia set becomes spatially dilute and systematically shrinks on the real axis prior to its ultimate degeneration~\cite{Saitoh1996}. While a detailed analysis of these dimensions remains a promising topic for future work, the evaluation of numerical errors that we will present in subsequent sections provides a solid foundation for such dimensional investigations.

As noted in the previous section, in higher-dimensional maps, the Julia set approaches the IDP in the integrable limit. Since our DLM is one-dimensional, there is no IVPP. Instead, all points of period $n$ cover the complex $z$-plane only when the parameter $\mu$ takes the value $e^{im\pi/n},\ m=1,2,\cdots, n-1$. Otherwise, as long as $a=0$, no periodic points exist. In \S4, we will show that the Julia set of the DML also degenerates to the IDP in the limit $a\to 0$.

The degeneration of the Julia set into these singular loci provides a link to the algebraic structure of discrete integrable systems. 
The phenomenon of singularity confinement is defined as a dynamical process in which an orbit of a rational map falls into a singularity and diverges to infinity, but returns to a finite point after a finite number of iterations~\cite{Yumibayashi2014, Saito2014}. It was demonstrated that when a map restarts from the singular loci (IDPs and VSPs) into which the Julia set has degenerated, the process of singularity confinement iteratively generates the continuous IVPPs of all periods~\cite{Saito2010, Yumibayashi2014}. Furthermore, this generation of IVPPs out of singularities is formulated as a projective resolution within the framework of triangulated categories~\cite{Saito2014}. Consequently, the degeneration of the chaotic invariant set into IDPs, namely VSPs constitutes the universal mechanism of the integrable--nonintegrable transition. In other words the pre-anti-Julia set becomes IDPs in the integrable limit.

\subsection{Necessity of the Dual Perspective}\label{subsec:necessity_dual}
As mentioned in the preceding subsection, the isolated repelling periodic points dense in the Julia set ultimately degenerate into singular loci, specifically IDPs. By definition, an IDP of a rational map is a singularity where both the numerator and the denominator vanish simultaneously. This fact implies that the degeneration of the pre-Julia set cannot be an isolated phenomenon; it requires the structural involvement of the poles of the map. Indeed, we show in Appendix~\ref{appendix_cancellation} that the denominators of the periodicity conditions possess the same number of zeros as the numerators. These zeros cancel only when $a=0$. This makes it clear that the divergence of the periodicity conditions is responsible for the disappearance of the pre-Julia set in the integrable limit, $a \to 0$.

In the context of the $n$-th periodic equation, $f_a^{(n)}(z)-z=0$, the roots of the numerator define the conventional periodic points. Conversely, the roots of the denominator define the pre-poles, which are mapped to infinity. In non-integrable rational maps, these divergent points aggregate to form a complementary structure, which we designate as the \textit{pre-anti-Julia set}. Because the IDPs emerge where the finite periodic points and the pre-poles coalesce, tracking the degeneration of the conventional pre-Julia set demands the simultaneous tracking of these divergent points. This internal duality, between the pre-Julia and pre-anti‑Julia sets (finite periodic points and the pre-poles) within the same rational map, provides the missing key to understanding the integrable--nonintegrable transition. The pre‑poles play an important role in determining the stability properties of the Julia set~\cite{Beardon1991}. Since our focus here is on its dual relationship with the pre-Julia set, we refer to it as the pre-anti‑Julia set in this article.

%%
%% sec. 3
%%
\section{The Anti-Julia Set: Dynamics of the  Pre-Poles}\label{sec:antijulia}
\subsection{Defining Pre-Poles via Poles of the Periodic Equation}
The $n$-th iterate $f_a^{(n)}(z)$ of the DLM $f_a(z)$ is a rational function. By isolating the trivial contribution at $z=0$, the map is represented in terms of coprime polynomial factors as
\begin{align}
f_a^{(n)}(z)=\mu^n z\frac{N_a^{(n)}(z)}{D_a^{(n)}(z)}\quad \xrightarrow[a \to 0]{}\quad f_0^{(n)}(z)=\frac{\mu^n z}{1+\mu\frac{1-\mu^n}{1-\mu}z}
\label{fantof0n}
\end{align}
where $N_a^{(n)}(z)$ and $D_a^{(n)}(z)$ encode the non-trivial zeros and finite poles, respectively. The conventional periodic points of period $n$ are given by the roots of $f_a^{(n)}(z) - z = 0$. For non-zero points, clearing the denominator yields the numerator condition $\mu^n N_a^{(n)}(z) - D_a^{(n)}(z) = 0$. Since this polynomial always contains the factor $X(z) = \mu z - \mu + 1$ associated with the fixed point (period-1 point), we factor it out by defining $A_a^{(n)}(z)$ via
\begin{align}
X(z)A_a^{(n)}(z) \coloneqq \mu^n N_a^{(n)}(z) - D_a^{(n)}(z).
\label{XA}
\end{align}
Here, $A_a^{(n)}(z)$ determines nontrivial solutions of $f_a^{(n)}(z)=z$ after removing the fixed-point factor, for $n\ge 2$. Consequently, the periodicity condition translates to the rational equation:
\begin{align}
\frac{A_a^{(n)}(z)}{D_a^{(n)}(z)} = 0,\qquad n=2,3,4,\cdots.
\label{A/D}
\end{align}
The finite roots of this equation are determined by the numerator equation $A_a^{(n)} (z) = 0$. The Julia set is formed by the closure of the repelling periodic points derived from these numerator roots.

Conversely, the poles of the periodic equation are determined by the roots of the denominator:
\begin{align}
D_a^{(n)}(z) = 0.
\end{align}
A root of this equation represents a pre-pole that maps to infinity under $n$ iterations, i.e., a preimage $f_a^{(-n)}(\infty)$ of $\infty$. While the dynamics of the numerator $A_a^{(n)}(z)$ determine the finite periodic structure of the map, the dynamics of the denominator $D_a^{(n)}(z)$ determine its divergent structure.

For $a\ne 0$, the polynomials $A_a^{(n)}(z)$ and $D_a^{(n)}(z)$ share no common roots, meaning that the finite periodic points and the pre-poles are distinct and completely separated in the phase space except at $\infty$. As $n$ increases, the pre-poles aggregate to form a set. We define the pre-anti-Julia set by this set and the anti-Julia set by its closure. The anti-Julia set constitutes the dual structure to the Julia set, originating entirely
from the divergence of the periodicity conditions

\subsection{Coexistence of the Julia and Anti-Julia Sets as Cantor Sets}
For a non-zero deformation parameter, $a$, the polynomials $A_a^{(n)} (z)$ and $D_a^{(n)}(z)$ are relatively prime and share no common roots. 
Thus, if we define the set of finite periodic points $P_a^{(n)}$ and the set of pre-poles $Q_a^{(n)}$ of period $n$ as
\begin{align}
P_a^{(n)} &= \{ z \in \mathbb{C} \mid A_a^{(n)}(z) = 0 \}, \\
Q_a^{(n)} &= \{ z \in \mathbb{C} \mid D_a^{(n)}(z) = 0 \},
\end{align}
this algebraic fact guarantees that they are completely distinct:
\begin{align}
P_a^{(n)} \cap Q_a^{(n)} = \emptyset \quad (a \neq 0).
\end{align}
Consequently, the phase space accommodates two separate infinite discrete structures: the (conventional) pre-Julia set formed by the periodic points, and the pre-anti-Julia set formed by the pre-poles.

As stated in Sec.~\ref{sec:duality}, there is a specific deformation parameter regime where the conventional Julia set is entirely confined to the real axis and forms a Cantor set~\cite{Saitoh1996}. In this same regime, all roots of the denominator equation $D_a^{(n)}(z)=0$ are also real. Because these divergent poles are generated by the backward iterations of the same real rational map, the closure of the pre-poles forms the same Cantor Julia set on the real axis. This observation admits a broader formulation. It is known that when $\infty$ is a repelling fixed point of a map $f$, the closure of the set of all pre‑poles,
\begin{align}
J_{\mathrm{anti}}(f);=\overline{\bigcup_{n\ge 1} f^{(-n)}(\infty)}
\label{antiJulia}
\end{align}
coincides with the Julia set $J(f)$. For the DLM $f_a(z)$, the repelling condition holds precisely when $0<a\le 1/2$. In this parameter range, the Julia set and the anti‑Julia set are identical as point sets in the complex plane, despite arising as the closures of two generating sets that share no points except $\infty$. See Appendix~\ref{appendix_DML_infinity} for more detail.

\subsection{Asymptotic Approach Along Algebraic Curves}
The roots of the equations $A_a^{(n)} (z)=0$ and $D_a^{(n)}(z)=0$ define the trajectories of the finite periodic points and the pre-poles respectively, parameterized by the real deformation parameter $a$. By eliminating $a$, one can derive algebraic curves in the complex plane that dictate these trajectories. The analysis of these curves reveals a definitive pairing mechanism: a specific $n$-th periodic point $p_j(a)\in P_a^{(n)}$ and its corresponding $n$-th anti-periodic point $q_j(a)\in Q_a^{(n)}$ move along their respective algebraic curves and coincide in the limit $a \to 0$
\begin{align}
\lim_{a \to 0} p_j(a) = \lim_{a \to 0} q_j(a) \eqqcolon z_j^{\ast}.
\end{align}
At this intersection point $z_j^{\ast}$, both the numerator and the denominator of the periodic equation vanish simultaneously as will be shown in Appendix~\ref{appendix_cancellation}:
\begin{align}
A_0^{(n)}(z_j^{\ast}) = 0, \quad D_0^{(n)}(z_j^{\ast}) = 0.
\end{align}

Consequently, as $a\to 0$, the finite periodic points and pre-poles do not move independently. Instead, each finite periodic point and its corresponding pre-poles approach each other along their respective algebraic curves. They exhibit a strong correlation, as indicated by the following recurrence relations:
\begin{widetext}
\begin{align}
D_a^{(n+1)}&=D_a^{(n)}\Big(\big(1+(1-a)\mu z\big)D_a^{(n)}+(1-a)\mu zX(z)A_a^{(n)}\Big),\\
A_a^{(n+1)}&=-\Big(D_a^{(n)}\Big)^2+\big(\mu-\mu(1+a)z\big)A_a^{(n)}D_a^{(n)}-a\mu zX(z)\Big(A_a^{(n)}\Big)^2.
\end{align}
\end{widetext}
Since the points are continuously linked through these intersecting curves, it is reasonable to expect that an asymptotic one‑to‑one correspondence will arise.

%%
%% sec. 4
%%
\section{The ``Big Crunch'' at the Integrable Limit}\label{sec:bigcrunch}
\subsection{Algebraic Cancellation of the Dual Structures}\label{sec:cancellation}
As established in the previous section, the elements of the disjoint Cantor sets move along the algebraic curves and coincide at the intersection points, $z_j^{\ast}$ in the limit $a\to 0$. Algebraically, this implies that the polynomials $A_a^{(n)} (z)=0$ and $D_a^{(n)}(z)=0$ acquire a massive number of common factors at the integrable limit, as shown in detail in Appendix~\ref{appendix_cancellation}. To formulate this collision, we numerically factorize the numerator and the denominator of the periodic equation for some fixed values of $a$ and $\mu$, by separating the colliding pairs from the rest of the components:
\begin{align}
A_a^{(n)}(z) &= R_a^{(n)}(z) \prod_{j=1}^{2^n-2} (z - p_j(a)), \\
D_a^{(n)}(z) &= S_a^{(n)}(z) \prod_{j=1}^{2^n-2} (z - q_j(a)).
\end{align} Here, $p_j(a)\in P_a^{(n)}$ and $q_j(a)\in Q_a^{(n)}$ denote the pairs of the finite periodic points and the pre-poles, while $R_a^{(n)}(z)$ and $S_a^{(n)}(z)$ are polynomials representing the residual roots that do not participate in this specific pairing.

In the non-integrable regime, $a\neq 0$, the polynomials $A_a^{(n)} (z)$ and $D_a^{(n)}(z)$ are relatively prime. However, as $a\to 0$, the roots $p_j(a)$ and $q_j(a)$ converge to the identical simultaneous zeros, $z_j^{\ast}$. Consequently, the numerator and the denominator generate a common polynomial factor of degree $2^{n-1}-1$. Taking the limit of the rational periodic equation, this common factor is algebraically canceled:
\begin{align}
\lim_{a \to 0} \frac{A_a^{(n)}(z)}{D_a^{(n)}(z)} = \frac{R_0^{(n)}(z) \prod_{j=1}^{2^{n-1}-1} (z - z_j^{\ast})}{S_0^{(n)}(z) \prod_{j=1}^{2^{n-1}-1} (z - z_j^{\ast})} = \frac{R_0^{(n)}(z)}{S_0^{(n)}(z)}.
\end{align}
This algebraic cancellation constitutes the definitive mechanism of the collision and annihilation of the dual structures. The infinite number of the repelling periodic points dense in the Julia set, along with their counterpart pre-poles, logically vanish from the phase space through this algebraic reduction of the degree of the rational function.

The disappearance of the Cantor sets via this massive cancellation guarantees the transition from the chaotic state to the integrable state. 
The remaining polynomials, $R_0^{(n)}(z)$ and $S_0^{(n)}(z)$, which survive the annihilation, determine the regular periodic structure and the divergent
structure of the completely integrable system at $a=0$. In fact, we find
\begin{align}
R_0^{(n)}(z)=-\frac{1-\mu^n}{1-\mu},\qquad
S_0^{(n)}(z)=1+\mu\frac{1-\mu^n}{1-\mu}z,
\end{align}
so that $f_0^{(n)}(z)$ of \eqref{fantof0n} is recovered, as shown in Appendix~\ref{appendix_cancellation}.

\subsection{Remnants of the Annihilation: IDPs as Singular Loci}
The algebraic cancellation, established in the preceding subsection, dictates that the collision of the dual structures leaves algebraic remnants. As defined in Sec.~\ref{sec:antijulia}, the set of finite periodic points, $P_a^{(n)}$, and the set of pre-poles, $Q_a^{(n)}$, satisfy $P_a^{(n)} \cap Q_a^{(n)} = \emptyset$ for $a \neq 0$. At $a=0$, the convergence of the roots to the points $z_j^{\ast}$ translates to a non-empty intersection of these two sets:
\begin{align}
P_0^{(n)} \cap Q_0^{(n)} = \{ z_1^{\ast}, z_2^{\ast}, \dots, z_{n-1}^{\ast} \}.
\end{align}

At these intersection points, both the numerator and the denominator of the periodic equation vanish simultaneously:
\begin{align}
A_0^{(n)}(z_j^{\ast}) = 0, \quad D_0^{(n)}(z_j^{\ast}) = 0.
\end{align}
By definition, these simultaneous roots constitute the IDPs of the rational map. While the infinite number of roots dense in the Julia set and the anti-Julia set vanish from the phase space through the algebraic cancellation, the intersection points $z_j^{\ast}$ remain at $a=0$ as algebraic singularities.

For the DLM, $f_a(z)$, the coordinates of these remnants can be derived by analyzing the factorized polynomials $A_0^{(n)}(z)$ and $D_0^{(n)}(z)$ as shown in Appendix~\ref{appendix_cancellation} (see also~\cite{Saitoh1996} for an earlier preliminary mention). The algebraic reduction reveals that the simultaneous roots form a finite discrete set, designated as the set of IDPs. For a given period $n$, the set of IDPs consists of $n-1$ points given by:
\begin{align}
z_k^{\ast} = \frac{-1}{\sum_{m=1}^{k} \mu^m}, \quad (k = 1, 2, \dots, n-1).
\end{align}
This algebraic result demonstrates that the disjoint Cantor sets of the non-integrable regime degenerate into this set of IDPs at $a=0$, with the degeneracy $2^{n-k-1}$ at $z_k^*$ for period $n$.

The fact that these IDPs remain as the remnants of the annihilation provides the direct algebraic origin of the integrable structures. As formulated in Sec.~\ref{sec:duality}, when the map restarts from these singular loci, the process of singularity confinement iteratively generates the continuous IVPPs. Thus, the simultaneous zeros $z_{k}^{\ast}$ serve as the algebraic source from which the regular periodic structures of the completely integrable system emerge, completing the description of the integrable--nonintegrable transition.

\subsection{Numerical Visualizations of the Transition}
The finite-depth backward-orbit approximations of the pre-Julia and pre-anti-Julia sets for the DLM are executed by iterating the identical inverse map, $z_{i+1}=f_{a}^{(-1)}(z_{i})$, generating $2^{i}$ points at the $i$-th iteration. Despite sharing this generating rule, the two sets originate from different initial values, $z_{0}$. The pre-Julia set starts from the finite repelling fixed point $z_{0}^{\mathrm{(J)}}=1-1/\mu$ (for $\mu>1$), which is independent of $a$. Conversely, the pre-anti-Julia set originates from the pole of the rational map at infinity, $z_{0}^{\mathrm{(A)}} =\infty$, yielding $z_{1}^{\mathrm{(A)}} = -1/(\mu(1-a))$ after the first backward iteration.

At each step, the inverse map $z_{i+1}$ is evaluated by finding the roots of the quadratic equation:
\begin{align}
a z_{i+1}^2 - \left( 1 - (1 - a) z_i \right) z_{i+1} + \frac{z_i}{\mu} = 0. \label{eq:inverse_map}
\end{align}
Evaluating the smaller root via the standard quadratic formula induces catastrophic cancellation as $a \to 0$, causing an extreme loss of significance in IEEE 754 double-precision floating-point arithmetic. To track the chaotic boundaries down to arbitrary scales of $a$ without artificially destroying the fractal structures, we must reformulate the roots into a numerically stable form~\cite{Goldberg1991}. By isolating the core complex variable components into a residual rational term, $\mathcal{R}_i = 1 - (1 - a) z_i$, and defining a sign-enforcing intermediate algebraic variable
\begin{align}
T_i = \frac{1}{2} \left( \mathcal{R}_{i} + \text{sgn}\left(\Re(\mathcal{R}_{i})\right) \sqrt{\mathcal{R}_{i}^{2} - \frac{4 a z_i}{\mu}} \right),
\end{align}
the recurrence relation for the two bifurcated branches is stabilized as:
\begin{align}
z_{i+1} =
\begin{cases}
\dfrac{T_i}{a} & (\text{plus branch}), \\[3ex]
\dfrac{z_i}{\mu \cdot T_i} & (\text{minus branch}).
\end{cases}
\end{align}
This algebraic substitution completely removes the small parameter $a$ from the divisor of the small root calculation, successfully suppressing numerical errors to the level of machine epsilon ($\approx 10^{-16}$).

Because the inverse map acts as a contraction mapping, the macroscopic difference between the initial values, $z_{0}^{\mathrm{(J)}}$ and $z_{0}^{\mathrm{(A)}}$ rapidly diminishes. However, for any non-integrable regime, $a \neq 0$, both the finite-depth approximates of the pre-Julia set and the pre-anti-Julia set never intersect, maintaining a strictly bijective, one-to-one correspondence separated by a microscopic spatial gap scaling as, $\mathcal{O}(a)$. Preserving this exact pairing is crucial for comparing the dual sets. Consequently, stochastic sampling methods, like the Randomized Inverse Iteration Method (RIIM), which exponentially undersample specific regions and destroy the synchronized bijection, are inadequate. Instead, we must adopt a deterministic full-tree expansion that exhaustively computes and pairs all $2^{i}$ branches at each iteration.

This strict pairing requirement induces a conflict between topological resolution and algorithmic complexity $\mathcal{O}(2^{i})$. To visually resolve the $\delta_a = \mathcal{O}(a)$ gap, the spacing between adjacent points within the generated sets ($\Lambda^i$, where $\Lambda = e^{-\lambda}$ is the average contraction rate strictly determined by the Lyapunov exponent $\lambda>0$) must be smaller than $\delta_a$. Pursuing the integrable limit down to the ultimate precision of the stabilized double-precision arithmetic ($\delta_a \sim a \sim \epsilon_{\text{machine}} = 2^{-53}$) requires $e^{-\lambda i} \le 2^{-53}$. To estimate the absolute minimum number of iterations required to reach this limit, we consider the most rapidly contracting scenario: a fully developed chaotic regime ($\lambda = \ln 2$, i.e., $\Lambda = 1/2$). Even in this best-case scenario, which yields the theoretical lower bound for the iteration depth, reaching the machine precision limit necessitates $i=53$. Consequently, preserving the exact bijective pairing down to this scale demands the computation of $2^{53} \approx 9 \times 10^{15}$ exact algebraic pairs. Because this minimal depth required to expose numerical vulnerabilities far exceeds the practical iteration depth bounded by combinatorial explosion, and since backward iterations inherently suppress chaotic error amplification, accumulated computational errors remain entirely negligible for any realistically executable iteration depth.

\begin{figure*}[p]
\centering
\includegraphics[width=0.32\textwidth]{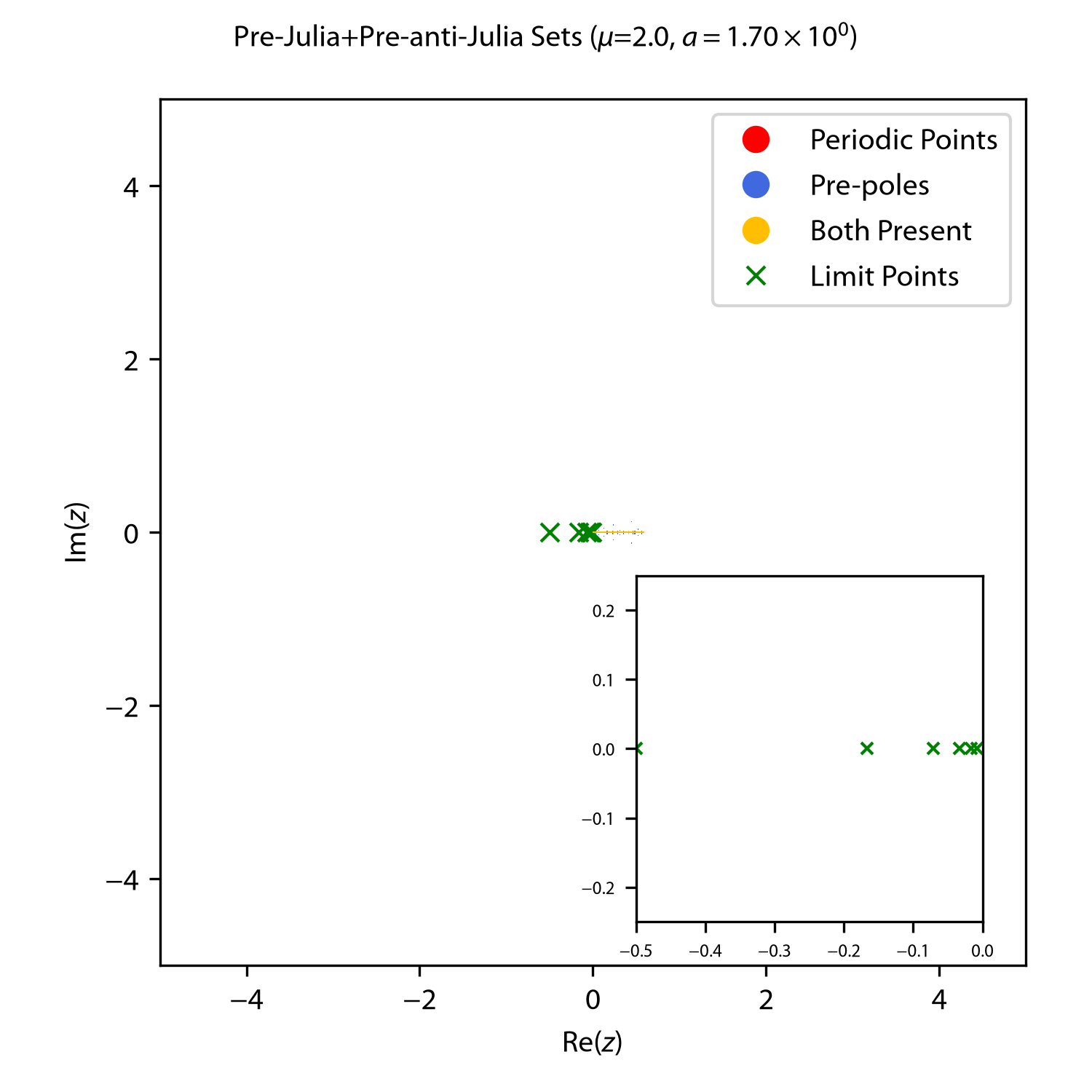}
\includegraphics[width=0.32\textwidth]{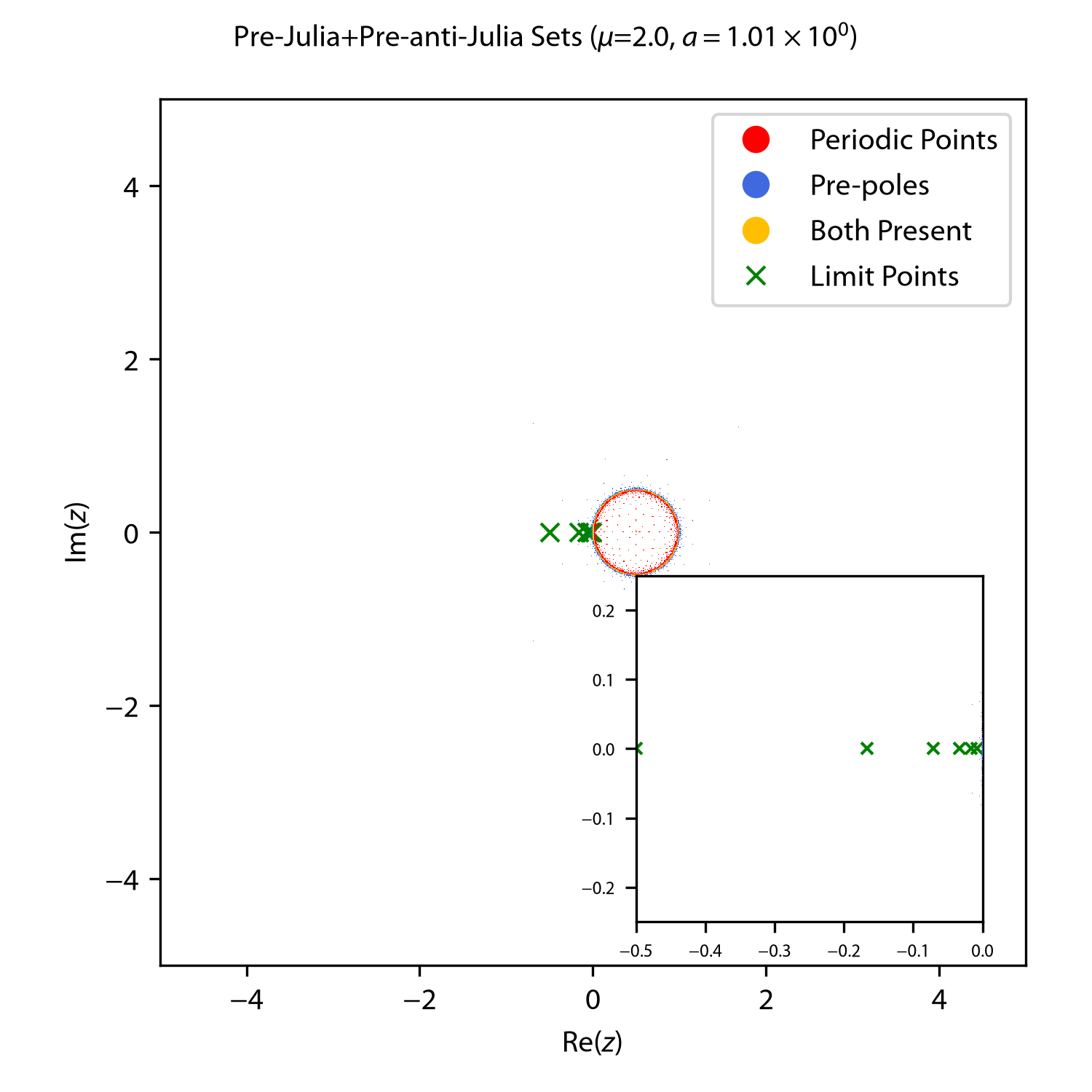}\\
\includegraphics[width=0.32\textwidth]{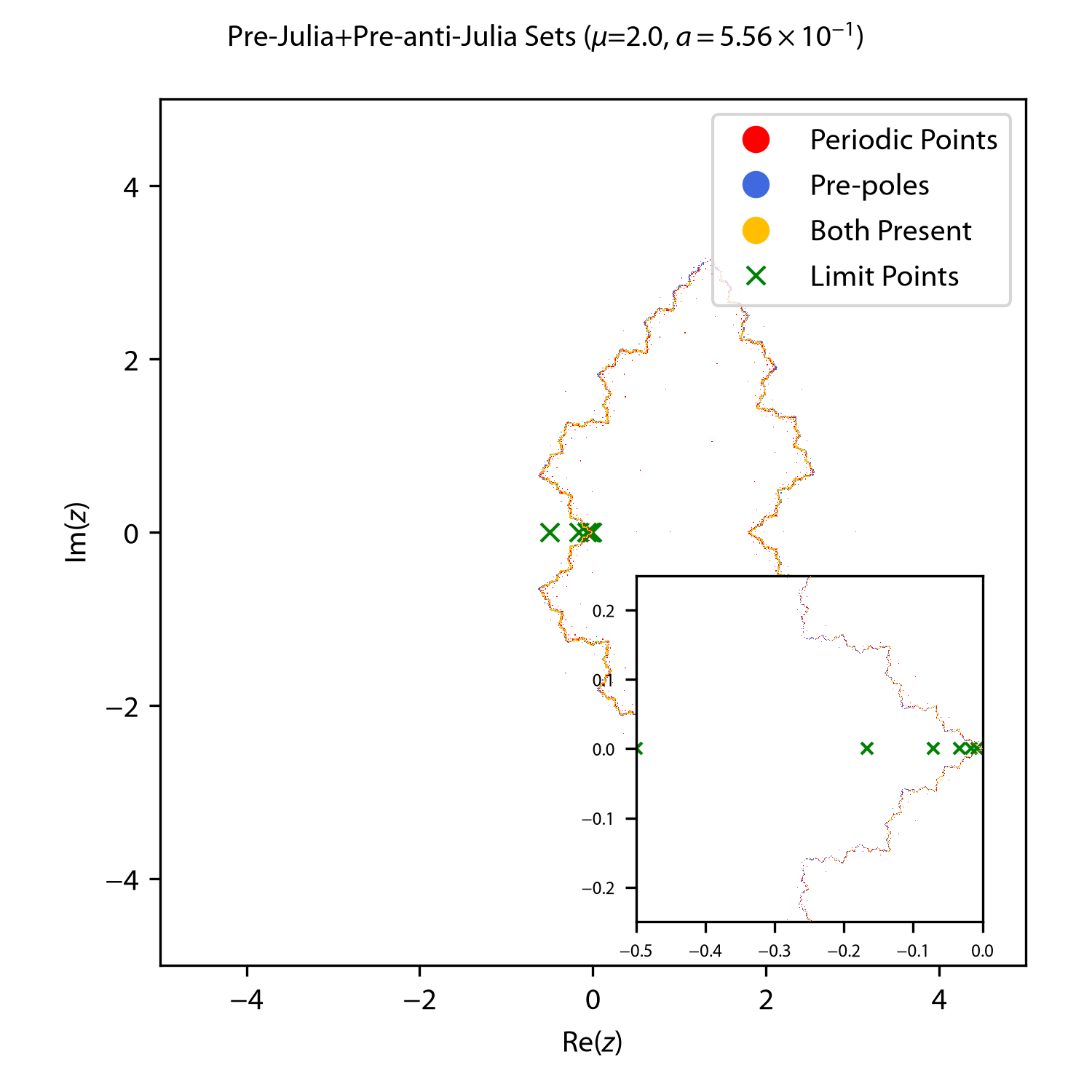}
\includegraphics[width=0.32\textwidth]{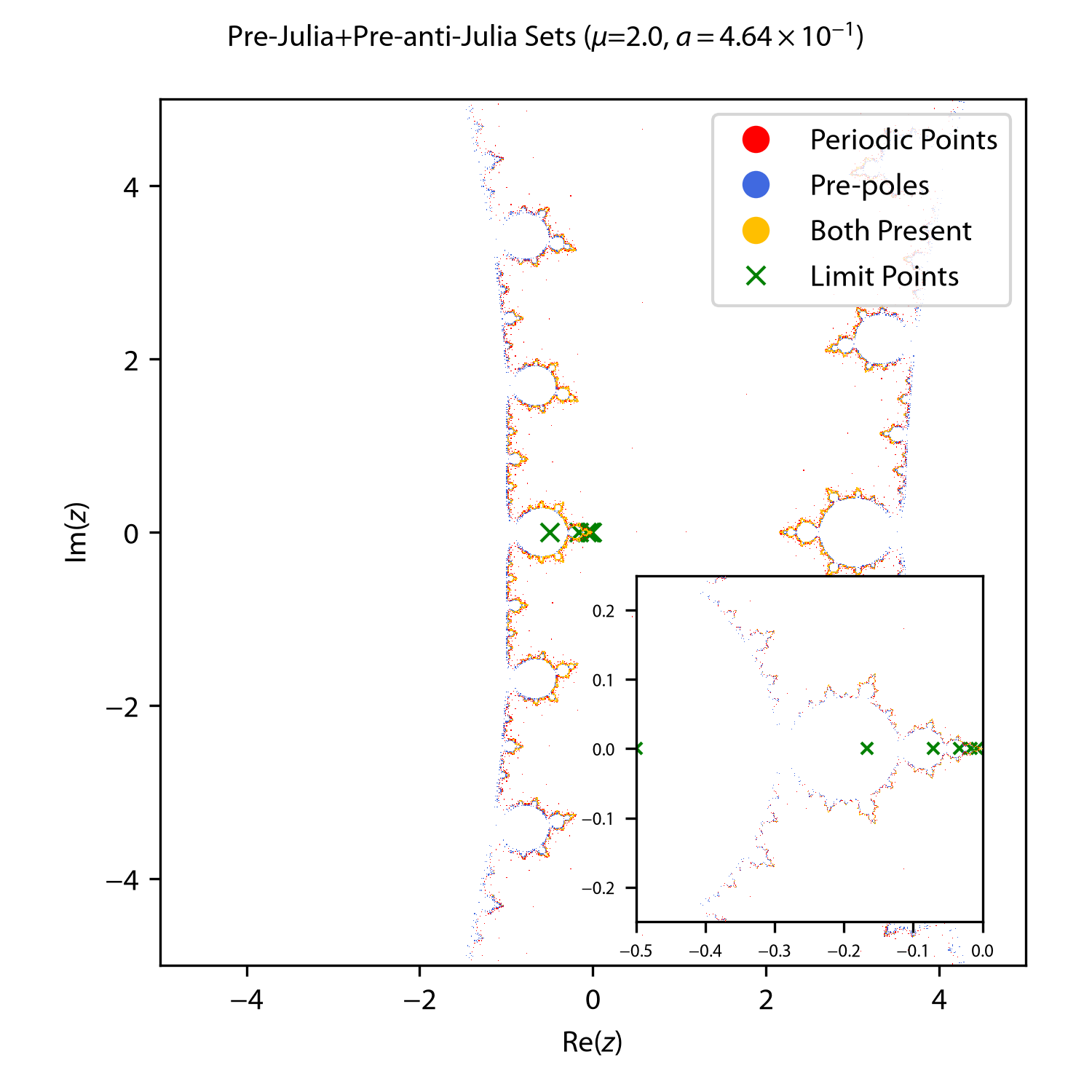}\\
\includegraphics[width=0.32\textwidth]{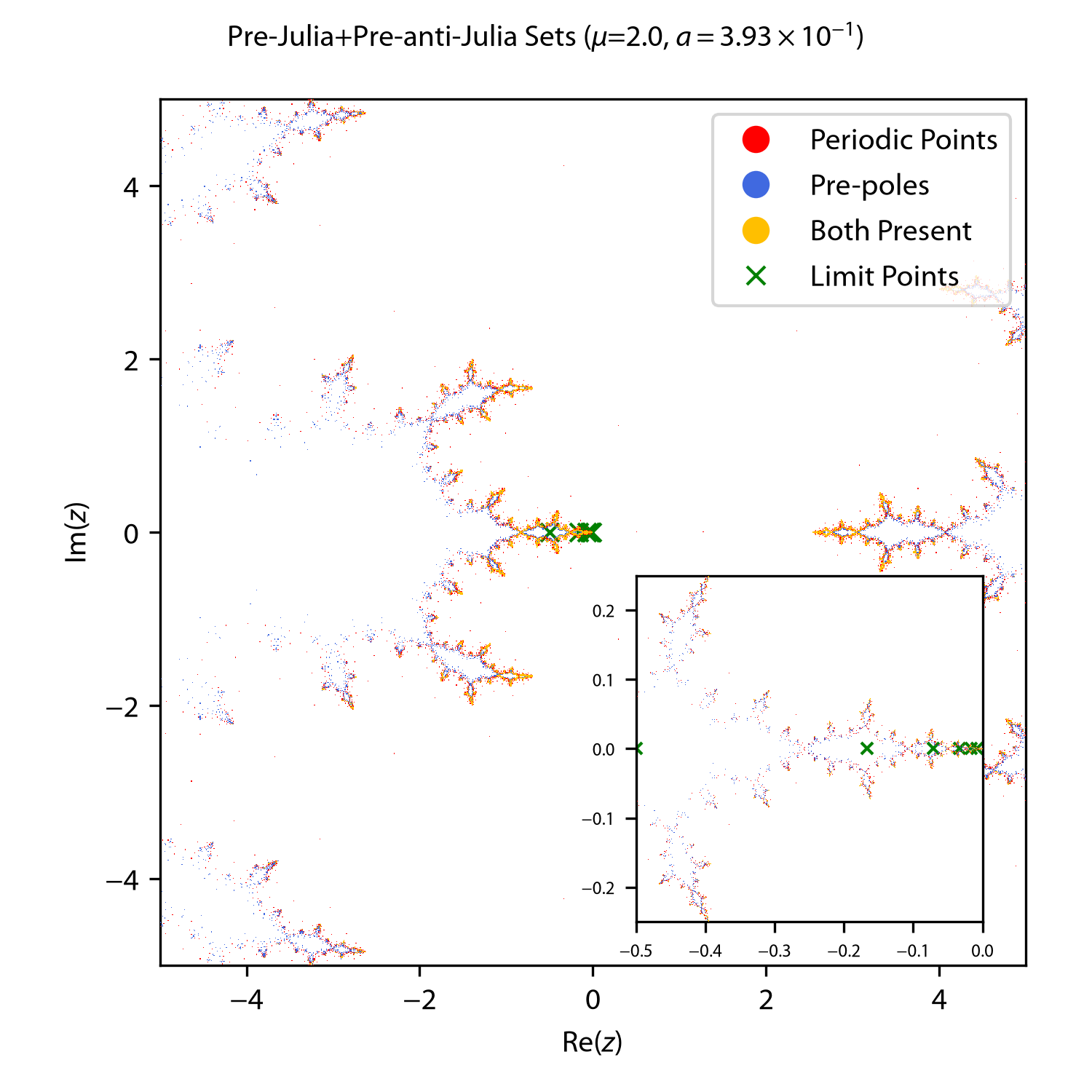}
\includegraphics[width=0.32\textwidth]{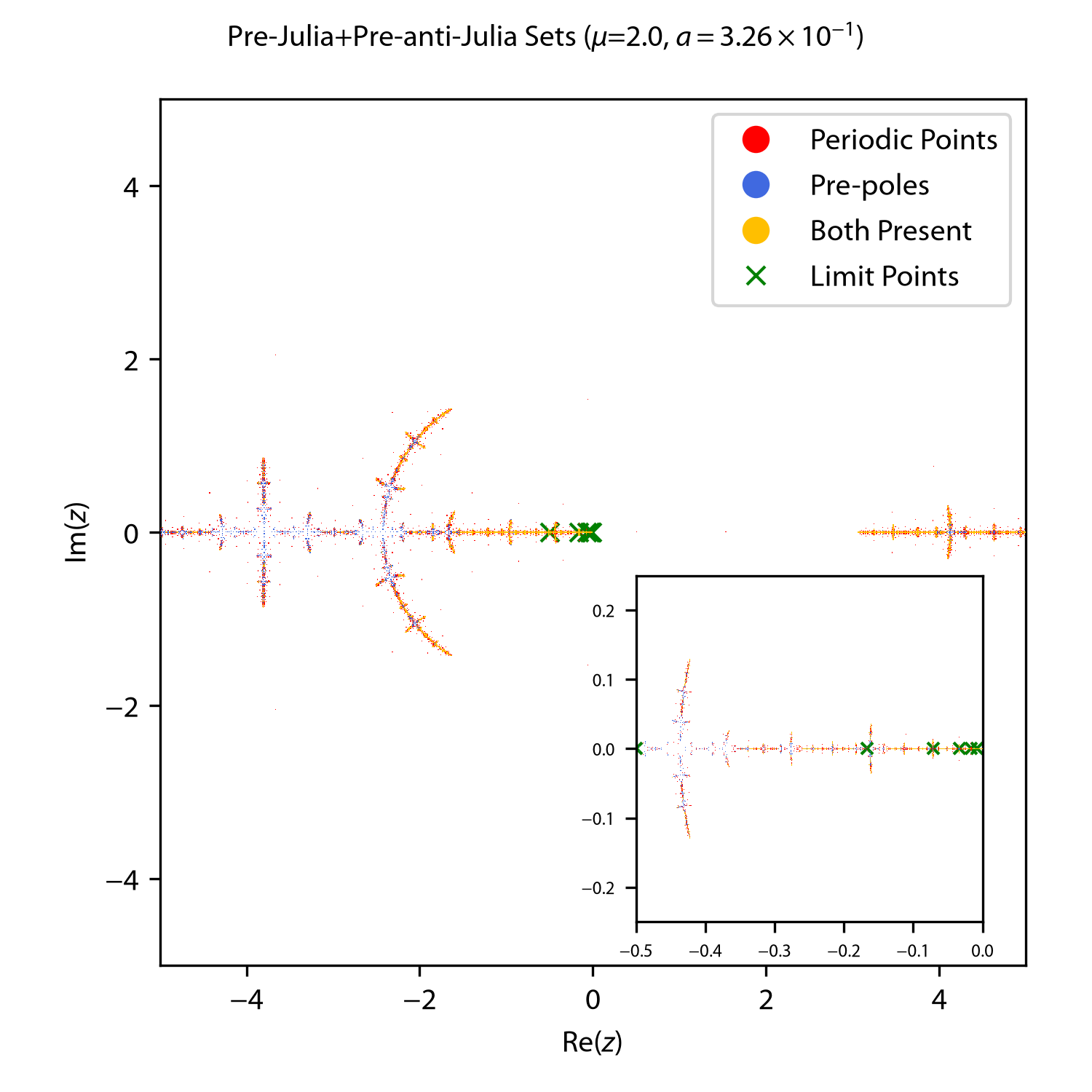}\\
\includegraphics[width=0.32\textwidth]{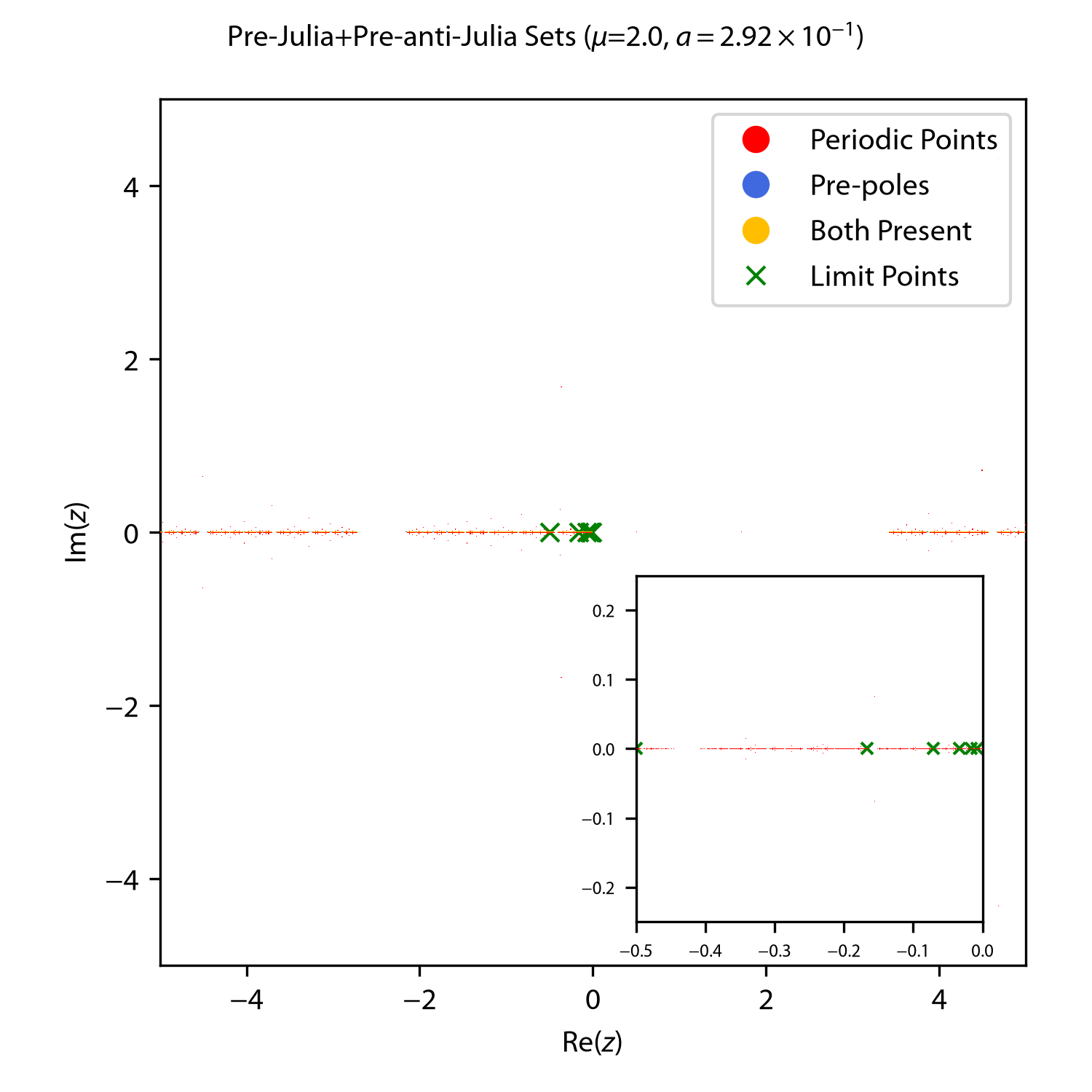}
\includegraphics[width=0.32\textwidth]{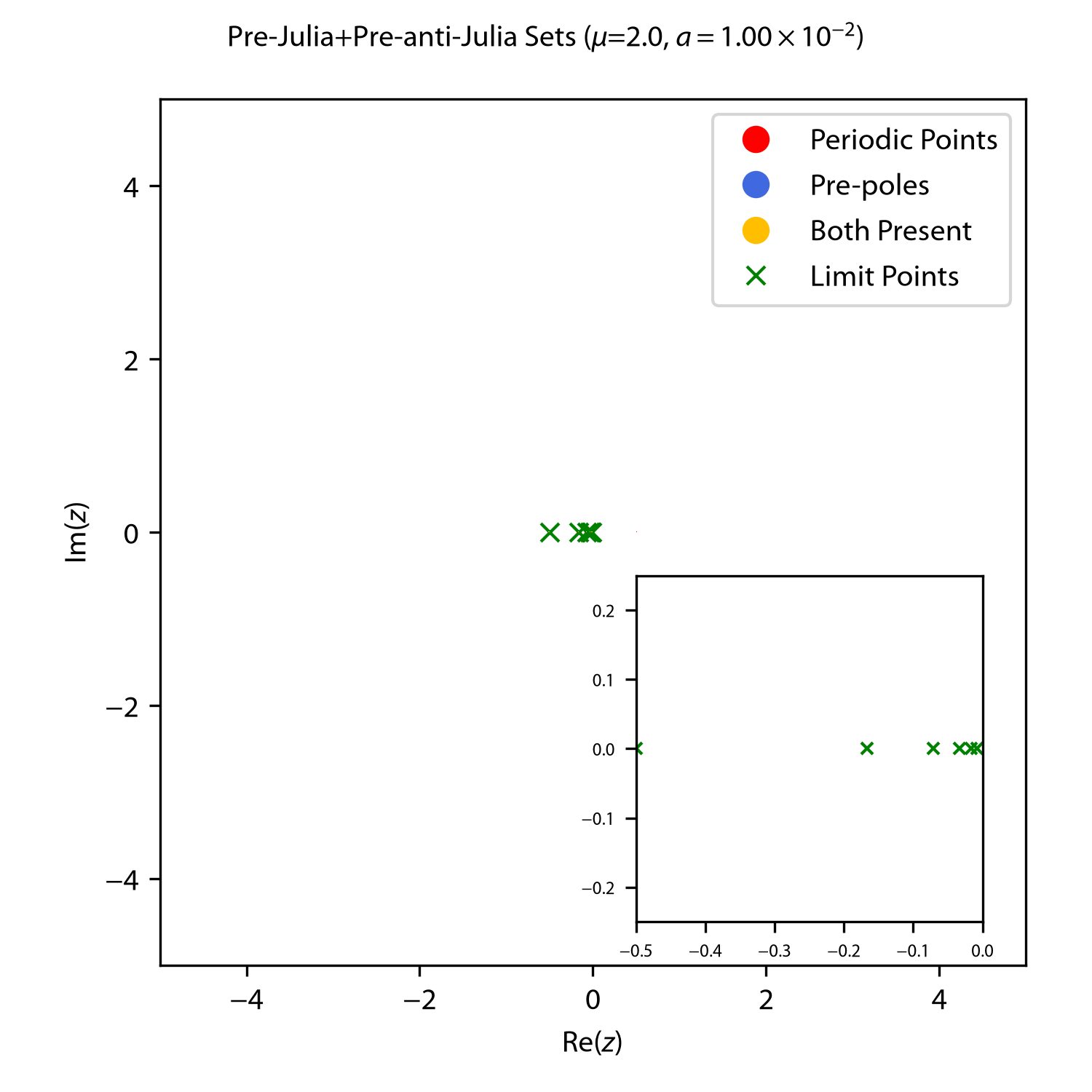}
\caption{Selected frames show the DLM transition under specific conditions, plotting the elements of the finite-depth approximates of the pre-Julia set and the pre-anti-Julia set  via deterministic full-tree expansion in 15 steps. ``Both Present'' indicates pixels where the periodic points and the pre-poles coexist, not where they coincide. As shown in the inset, zooming in reduces ``Both Present'' pixels. Although the two sets do not coincide for $a \neq 0$, visualizing this non-coincidence requires infinite resolution. The full animation sequence is available in the Zenodo repository~\cite{Iizawa2026_bigcrunch_zenodo}.}\label{fig:anti-julia}
\end{figure*}

Fig.~\ref{fig:anti-julia} presents numerical plots of the finite-depth approximates of the pre-Julia set and the pre-anti-Julia set, illustrating the preceding discussion: for $a \neq 0$, the topological difference originating from the distinct initial values is strictly preserved as an $\mathcal{O}(a)$ gap. As $a \to 0$, this gap systematically collapses. At $a=0$, the dual structures generated from the infinite pole and the finite fixed point degenerate simultaneously into an identical set of IDPs. This confirms that the initial-value dependency is completely lost through deterministic algebraic cancellation at the integrable limit.

\begin{figure*}[ht]
\centering
\includegraphics[width=1.0\textwidth]{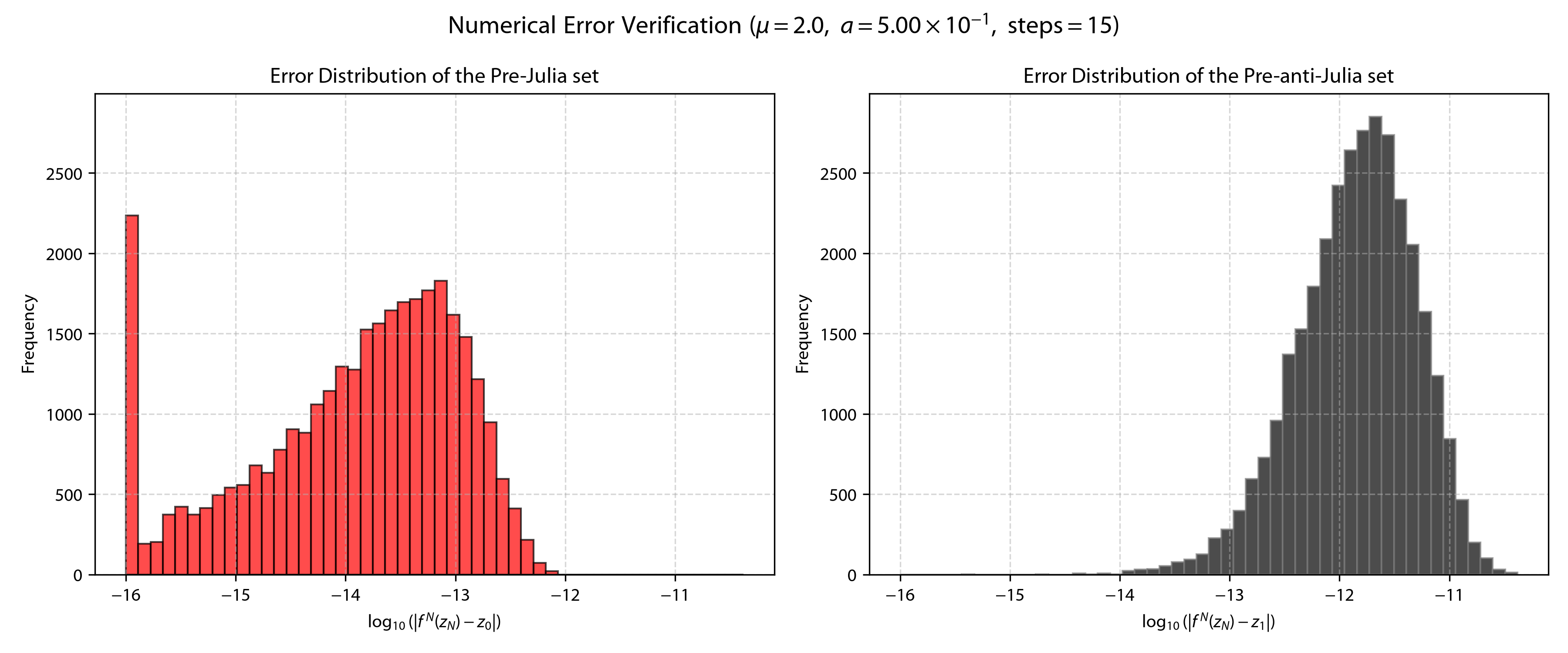}
\includegraphics[width=0.8\textwidth]{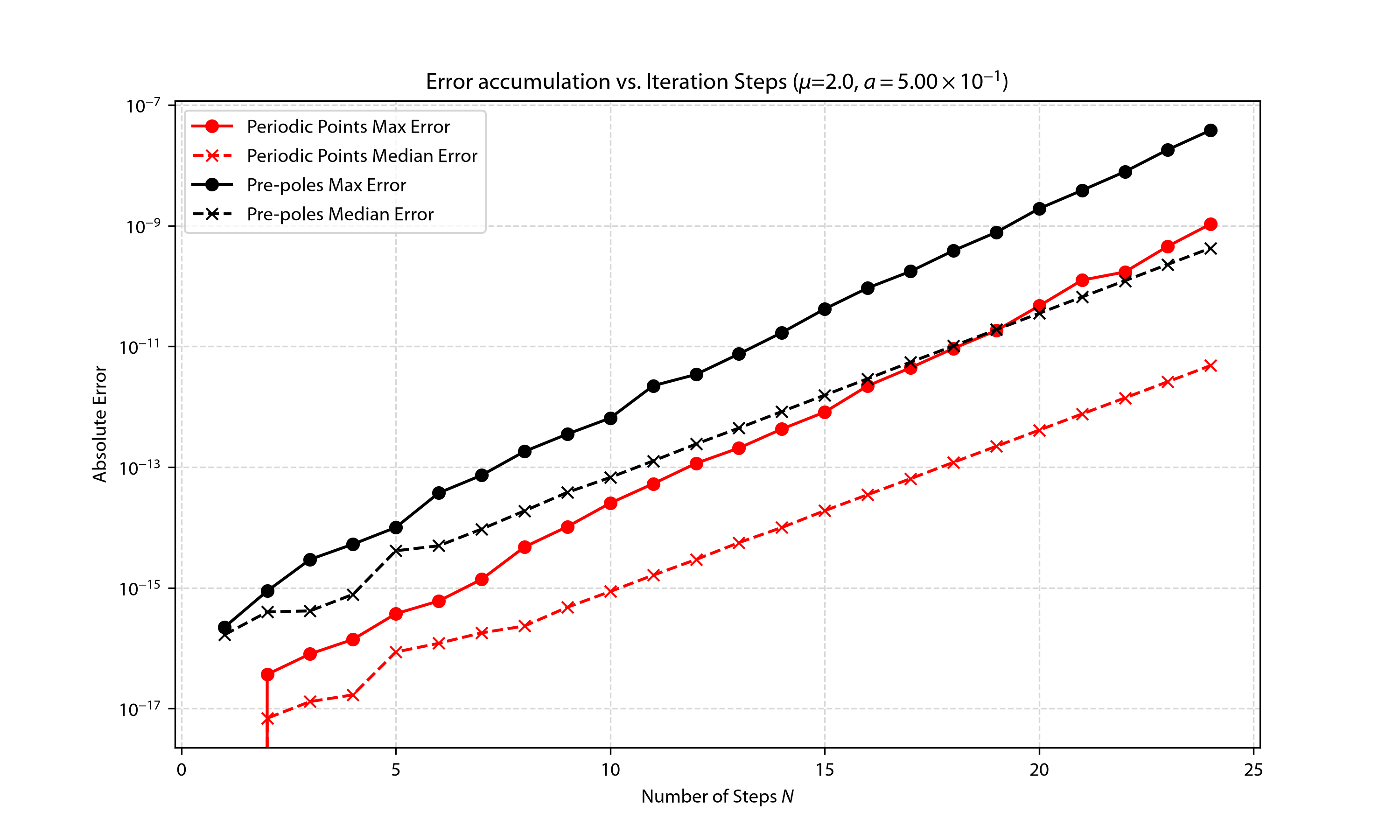}
\caption{Numerical error analysis for the finite-depth approximates of the pre-Julia set and the pre-anti-Julia set derivations. Forward mappings were applied for the same number of steps following inverse mappings to evaluate the difference; thus, this yields an estimate much more conservative than the actual numerical error.}\label{fig:error}
\end{figure*}

To quantitatively validate the numerical stability of the algorithm and assess the accumulated computational errors, we performed a round-trip error analysis, as illustrated in Fig.~\ref{fig:error}. Specifically, after generating the set of points $\{z_N\}$ through $N$ steps of iterations applying inverse map, we apply the forward mapping $f_a(z)$ for $N$ consecutive steps to reconstruct the initial values, yielding $\{\hat{z}_0\}$. The statistics of the discrepancy $|\hat{z}_0 - z_0|$ serves as our error metric. Because the forward map of DLM $f_a(z)$ is chaotic and acts as an expansive mapping that exponentially amplifies machine-precision fluctuations, this round-trip evaluation yields an estimate much more conservative than the actual error present in the generated fractal sets.

The upper panel of Fig.~\ref{fig:error} displays the distribution of these reconstructed errors for both sets. For the pre-Julia set, the vast majority of errors remain near the machine precision limit ($\sim 10^{-16}$), with the remainder smoothly distributed up to $10^{-12}$. In contrast, the error distribution for the pre-anti-Julia set peaks near $10^{-12}$, spanning roughly between $10^{-14}$ and $10^{-11}$. This upper limit of $\sim 10^{-12}$ at $N=15$ corresponds directly to the maximum local expansion rate of the forward mapping, which amplifies machine-precision fluctuations ($\sim 10^{-16}$) by a factor up to $2^{15} \approx 3.3 \times 10^4$. Additional numerical checks confirm that varying $\mu$ by a factor of a few or altering $a$ across several orders of magnitude yields no error growth that would visually distort the fractal structures.

The lower panel of Fig.~\ref{fig:error} shows the median and maximum errors as functions of the iteration step $N$. Although the accumulated error increases exponentially with the step number, practical computations on standard personal computers are constrained to $N \le 24$ due to memory limitations, as generating $2^{24}$ elements for each set ($\approx 3.35 \times 10^7$ complex numbers in total) approaches the handling capacity of standard visualization pipelines and hardware. Consequently, within any executable iteration depth on conventional PC setups, the exponential error growth remains entirely insufficient to introduce visual artifacts or distort the fractal boundaries. We note that localized computational anomalies may occasionally emerge depending on the specific choices of $\mu$ and $a$; however, these isolated deviations do not alter the overall behavior.

%\afterpage{
%\clearpage
\begin{figure*}[p]
\centering
\includegraphics[width=1.0\textwidth]{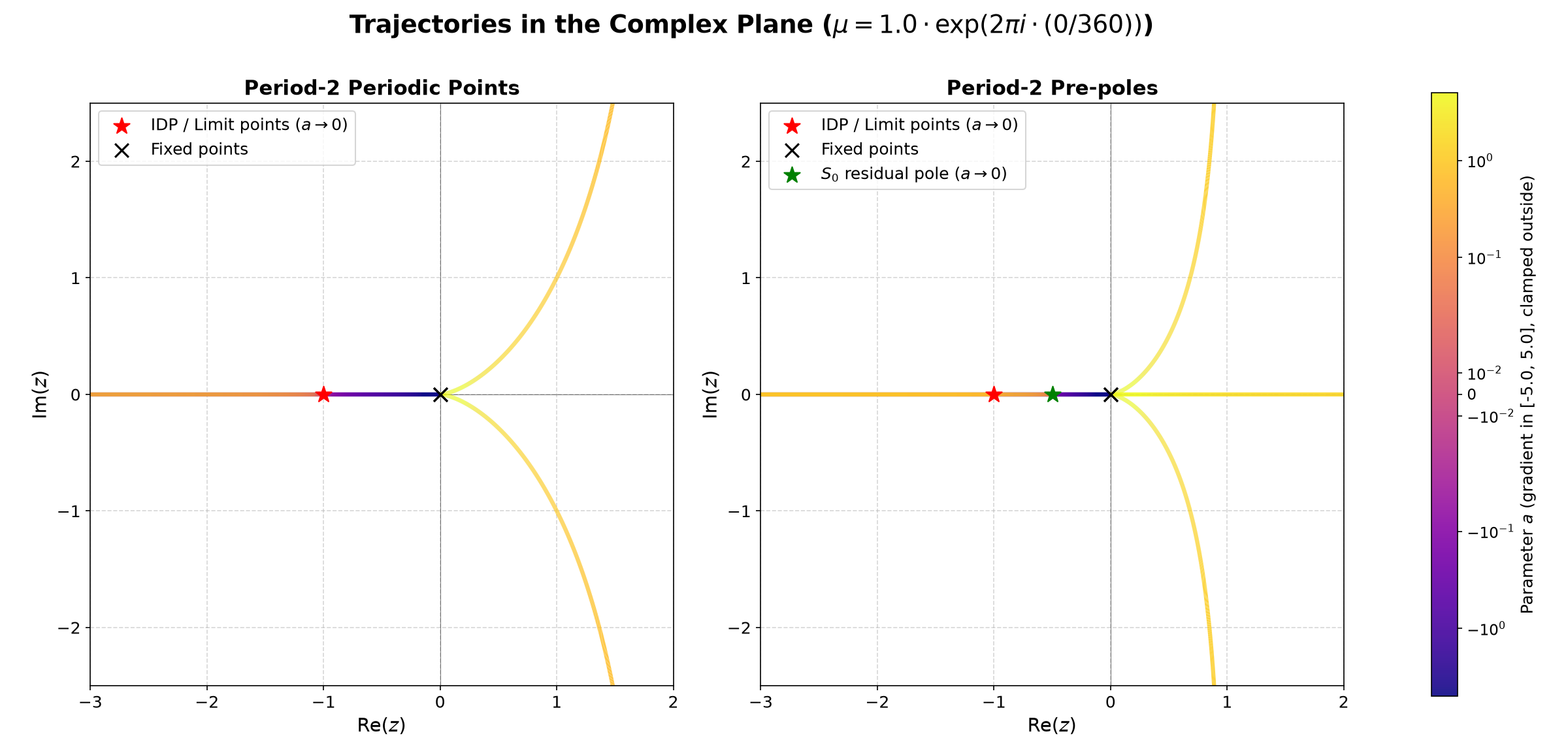}
\includegraphics[width=1.0\textwidth]{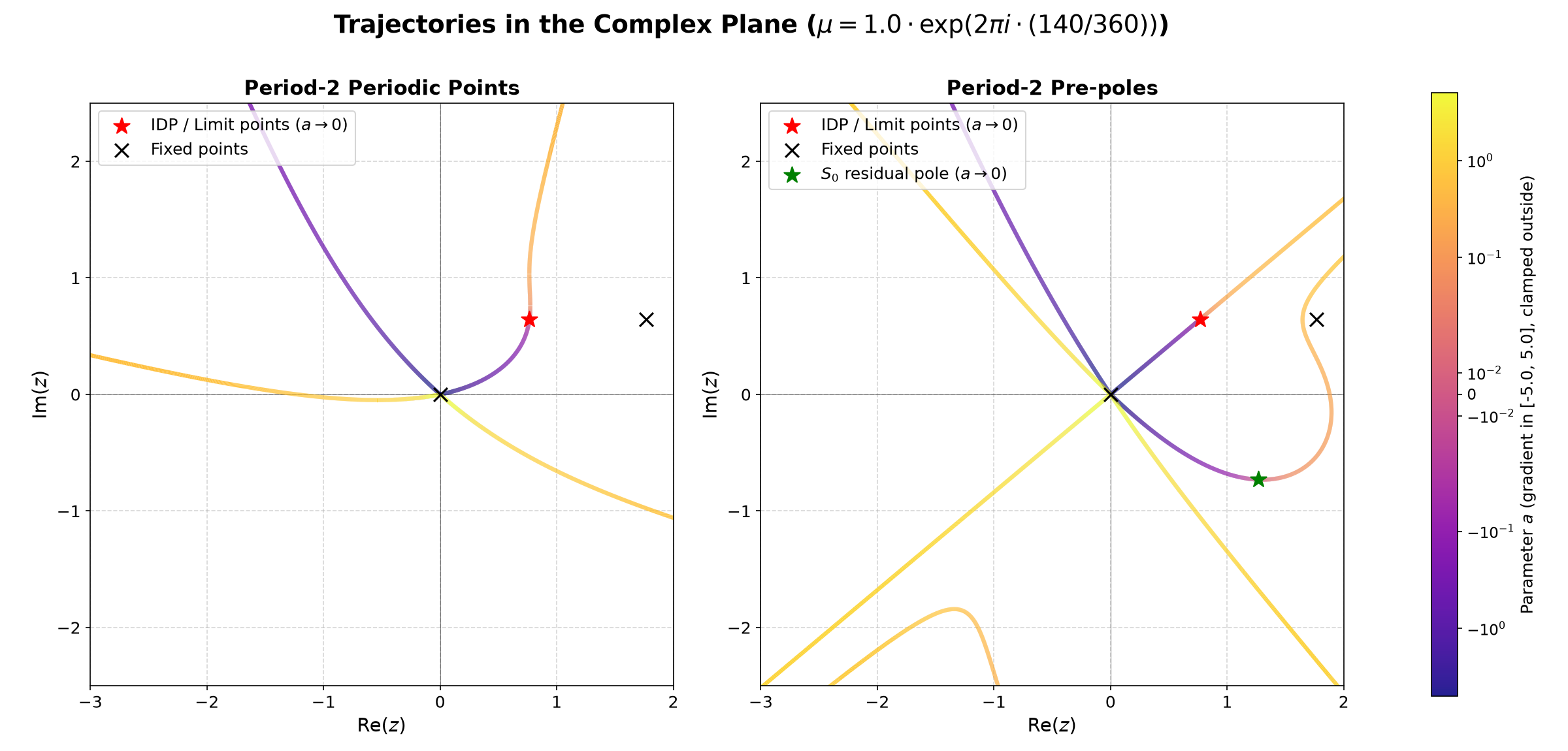}
\caption{Algebraic curves of the finite-depth approximates of periodic points the (left panels) and pre-poles (right panels) for period 2, parameterized by $a$ (color-coded along the trajectories). The upper panels correspond to $\theta = 0^{\circ}$ (real $\mu$), and the lower panels to $\theta = 140^{\circ}$. The green star is a zero point derived from $S_{0}^{(n)}$. Consequently, since the curve passing through the green star is derived from an uncancelled term of $D_a^{n}$, it is not a pre-pole curve. For any $a \neq 0$, curves sharing the same color never intersect; however, as $a \to 0$, all branches systematically converge and collide at the origin ($a = 0$). Varying the phase $\theta$ induces complex topological reconnections among the curves while strictly preserving the non-intersection rule for constant $a$. Note that these algebraic curves converge to the IDP as $a \to 0$, but since the periodic points (or pre-poles) have already annihilated on the IDP, the curves are not defined at the IDP itself (i.e., they are open curves). Similarly, parts of these algebraic curves converge to the origin (a fixed point) as $a \to \pm \infty$, but the origin itself is not part of the curves. Full animation: Zenodo repository~\cite{Iizawa2026_bigcrunch_zenodo}.}\label{fig:anti-periodic1}
\end{figure*}
%\clearpage
\begin{figure*}[p]
\centering
\includegraphics[width=1.0\textwidth]{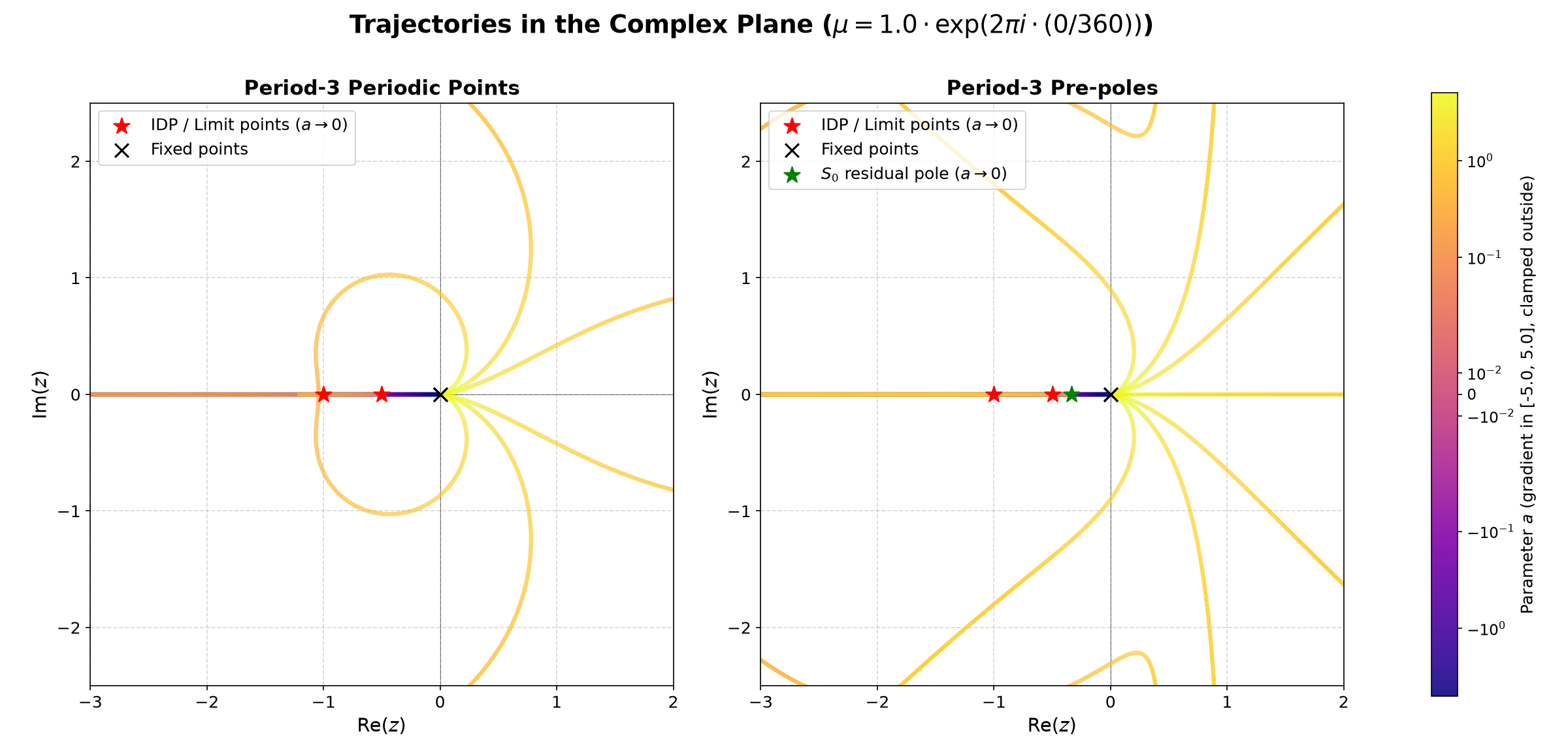}
\includegraphics[width=1.0\textwidth]{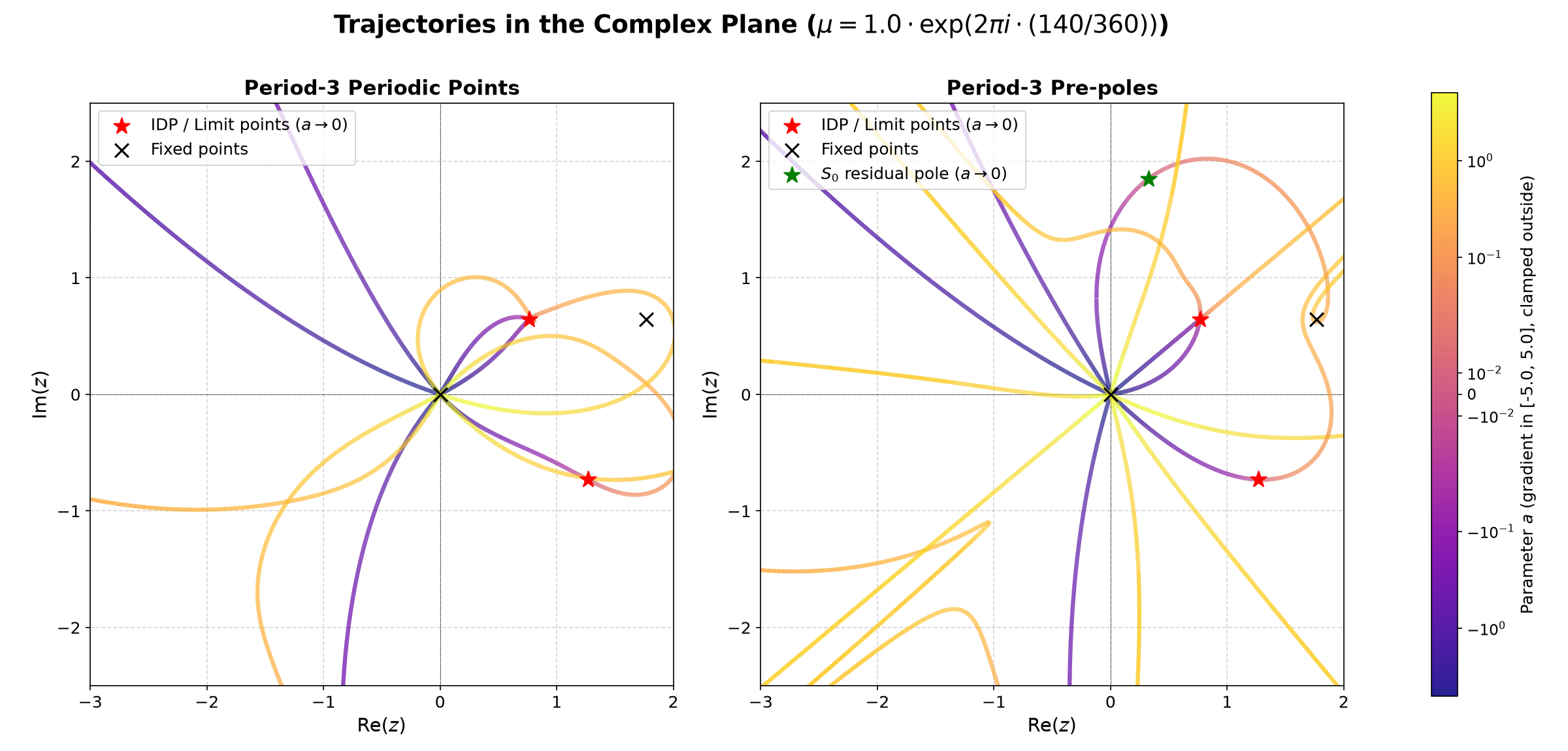}
\caption{Algebraic curves of the finite-depth approximates of  periodic points (left panels) and pre-poles (right panels)  for period 3.}\label{fig:anti-periodic2}
\end{figure*}
%\clearpage
\begin{figure*}[p]
\centering
\includegraphics[width=1.0\textwidth]{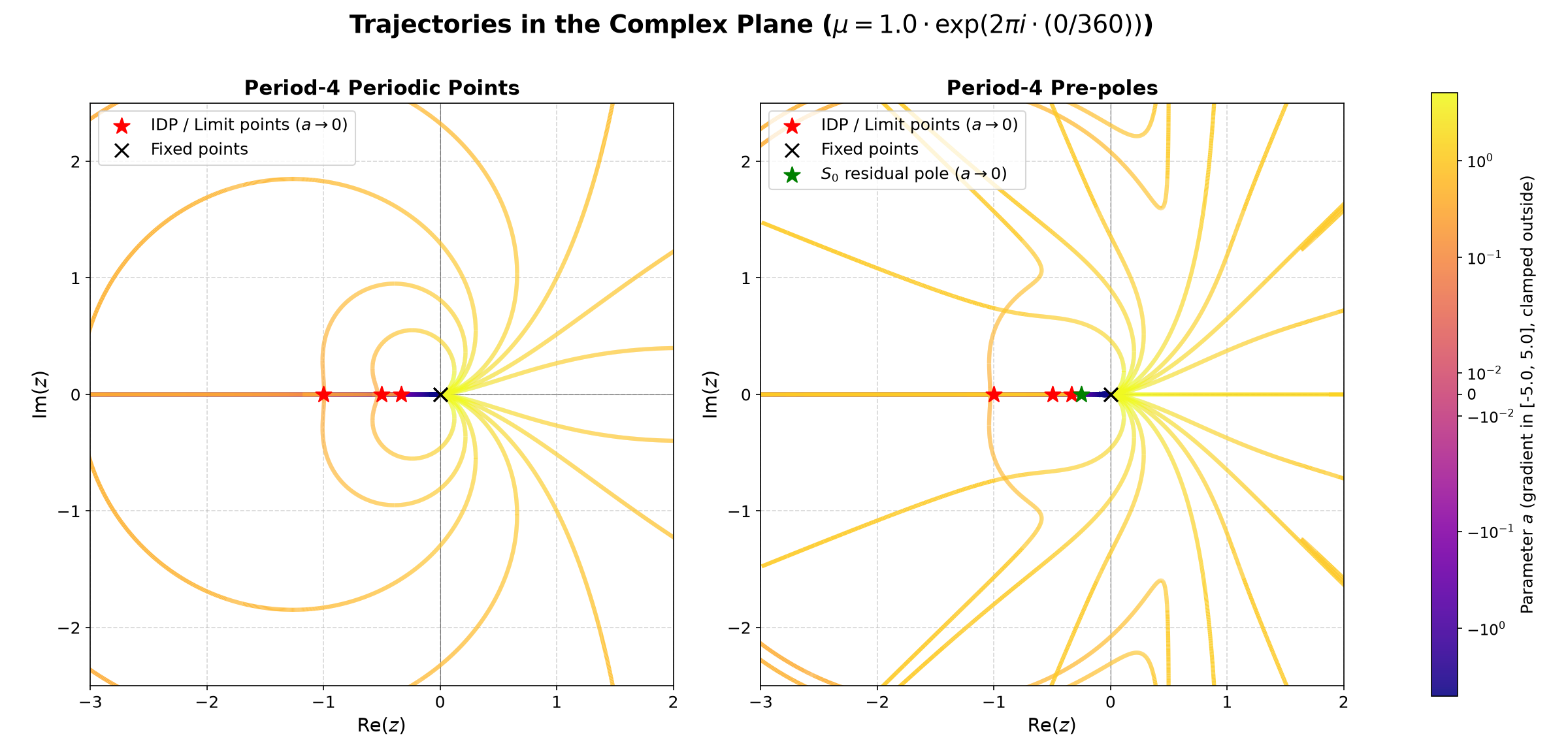}
\includegraphics[width=1.0\textwidth]{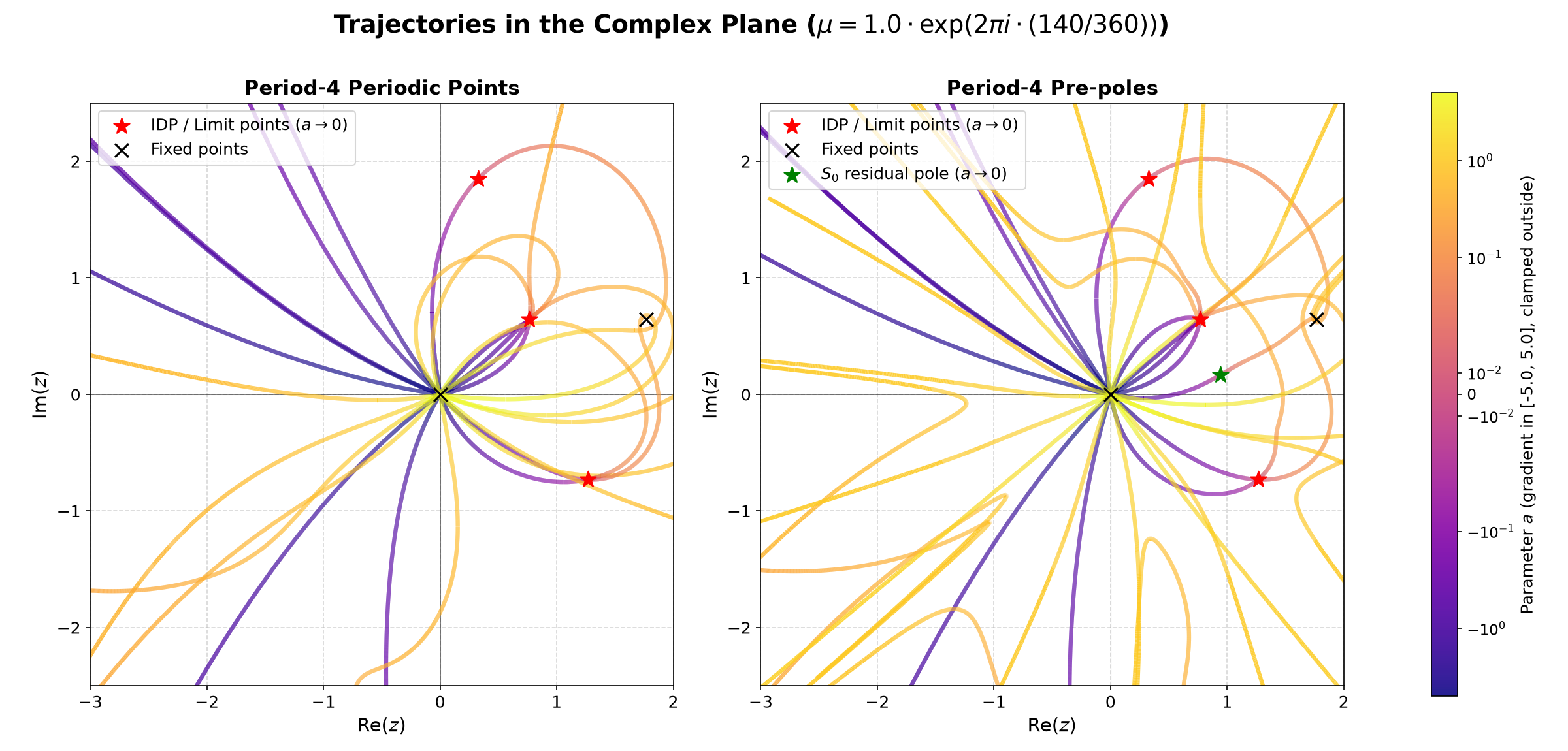}
\caption{Algebraic curves of the finite-depth approximates of periodic points (left panels) and pre-poles (right panels)  for period 4. (Note: Numerical tracing is restricted to periods up to 4; for period 5 and higher, the excessively high polynomial degree induces severe numerical instability in double-precision arithmetic, preventing accurate rendering.)}\label{fig:anti-periodic3}
\end{figure*}
\clearpage
%}

To reveal the essential geometry driving this transition, we trace the algebraic curves of periodic points and pre-poles generating the pre-Julia and pre-anti-Julia sets, respectively parameterized by $a$, as shown in Fig.~\ref{fig:anti-periodic1}--\ref{fig:anti-periodic3}. The value of $a$ is represented by color along these trajectories. For any non-integrable state ($a \neq 0$), curves never self-intersect at points sharing the same color. This strict prohibition of same-color intersections is fundamental: it demonstrates that despite their intricate complex trajectories, the pre-Julia set and the pre-anti-Julia set remain topologically disjoint for any $a \neq 0$.  As $a \to 0$, however, this topological distinction completely vanishes. The algebraic branches irresistibly collapse toward one another, culminating in a total collision at $a = 0$---a phenomenon we term the \textit{Big Crunch}. Rotating the phase $\theta$ drives non-trivial transformations across the complex plane, yet the core mechanism remains invariant: no matter how $\theta$ is varied, same-$a$ intersections are strictly forbidden, confirming that the degeneration at $a = 0$ is an intrinsic property of the integrable limit, invariant under phase variations.

%%
%% sec. 5
%%
\section{Conclusion: A Unified Picture of the Integrable--Nonintegrable Transition}\label{sec:conclusion}
By introducing the anti-Julia set, defined as the closure of the divergent poles, we revealed that the non-integrable phase space accommodates two disjoint generating sets, originating from the numerator and the denominator of the periodicity conditions, whose closures form coincident Cantor sets for small $a$.  As the system approaches the completely integrable state, these dual discrete structures converge along algebraic curves and collide at the IDPs.

The disappearance of chaos at the integrable limit is identified as the massive algebraic cancellation between these colliding sets. The infinite number of repelling periodic points and their divergent counterparts annihilate each other, leaving only the discrete IDPs as the singular loci of the map. These simultaneous zeros serve as the algebraic source of the regular integrable structures. This deterministic collision and annihilation of the dual fractal sets provides a unified algebraic picture of the transition between chaos and integrability.

As noted in Sec.~\ref{subsec:necessity_dual}, our anti‑Julia set is defined as the closure of the set of pre‑poles, and hence its behaviour is highly sensitive to the dynamics of the map at $\infty$. As explained in Appendix~\ref{appendix_DML_infinity}, when 
$0<a<1/2$ the anti‑Julia set coincides with the Julia set as a set, even though their defining numerator and denominator differ. Therefore, the anti‑Julia set is not identical to the Julia set in a dynamical sense, even though they coincide as point sets. Moreover, when $1/2<a<1$ the pre-anti-Julia set is not on the Julia set, as it is shown in Appendix B. These facts require that they be referred to by their respective names. Nevertheless, at $a=0$, corresponding to the integrable limit, the two dual sets collapse and degenerate into the same set of singular loci (IDPs).

Beyond these topological insights, the systematic error analysis established in this study provides a crucial quantitative benchmark. While early numerical estimates of the fractal dimensions were constrained by the computational limits of their time, our error evaluation lays a solid foundation for future high-precision dimensional analyses, offering a clear path toward fully unravelling the intricate scaling behavior near the integrable limit.

Extending our analysis to higher-dimensional systems represents a natural and promising direction for future research. In conventional near-integrable systems focusing on local perturbations, the presence of Arnold diffusion is well known, where trajectories undergo long-term global drift despite the persistence of a majority of KAM tori. It remains an open and intriguing question how our global approach using deformation maps behaves in the integrable limit for higher dimensions, and whether it can capture or encounter structural breakdown due to multi-dimensional resonances.

\begin{acknowledgments}
We are deeply indebted to Dr. Shuichi Otake for his extraordinary contributions to this project. We acknowledge support by the Open Access Publication Funds of the Technische Universit\"{a}t Braunschweig.
\end{acknowledgments}

S.S. conceptualized the work; S.S., M.I., and A.S. developed the methodology; M.I. wrote the original draft; M.I., S.S., Y.N., Y.M., H.H., and A.S. reviewed and edited the manuscript. All authors read and approved the final manuscript.

The authors declare no competing financial interest.

\section*{Data Availability}
Supplementary videos for further interest are available in Zenodo~\cite{Iizawa2026_bigcrunch_zenodo}.

\appendix
\section{Proof of the Cancellation in the Integrable Limit}\label{appendix_cancellation}
We show here how the exact cancellation of all factors in both the numerator and denominator of the periodicity equation occurs for any period $n$.

From the definition of $N_a^{(n)}(z)$ and $D_a^{(n)}(z)$ in \eqref{fantof0n}, we obtain the following recursive relations:
\begin{align}
N_a^{(n+1)}(z)&=N_a^{(n)}(z)\big(D_a^{(n)}(z)-a\mu^n zN_a^{(n)}(z)\big),\label{Na}\\
D_a^{(n+1)}(z)&=D_a^{(n)}(z)\big(D_a^{(n)}(z)+(1-a)\mu^{n+1}zN_a^{(n)}(z)\big),\label{Da}
%A_a^{(n)}(z)&=(\mu^nN_a^{(n)}(z)-D_a^{(n)}(z))/X(z)\label{Aa}
\end{align}
with initial conditions
\begin{align*}
N_a^{(1)}(z)=1-az,\qquad  D_a^{(1)}(z)=1+\mu(1-a)z.
\end{align*}
We now investigate the limit as $a\to 0$. In this limit the reccurence relations \eqref{Na}, \eqref{Da} reduce to 
\begin{align}
N_0^{(n+1)}(z)&=N_0^{(n)}(z)D_a^{(n)}(z),\label{N0}\\
D_0^{(n+1)}(z)&=D_0^{(n)}(z)\big(D_0^{(n)}(z)+\mu^{n+1}zN_0^{(n)}(z)\big),
\label{D0}
\end{align}
with initial conditions
\begin{align}
N_0^{(1)}(z)=1,\qquad D_0^{(1)}(z)=1+\mu z.
\label{N0,D0}
\end{align}

\begin{lemma}
\begin{align}
\frac{D_0^{(n)}(z)}{N_0^{(n)}(z)}=1+\left(\sum_{k=1}^n\mu^k\right) z \eqqcolon K_{n}(z)
\label{D/N=K}
\end{align}
\end{lemma}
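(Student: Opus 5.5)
Our plan is to prove the identity by induction on $n$, working directly with the reduced recursions \eqref{N0} and \eqref{D0}. Here we read the factor $D_a^{(n)}$ appearing in \eqref{N0} as $D_0^{(n)}$. That is exactly what \eqref{Na} gives at $a=0$, and it is presumably a typographical slip.

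For the base case $n=1$, the initial conditions \eqref{N0,D0} give
\[
\frac{D_0^{(1)}(z)}{N_0^{(1)}(z)}=1+\mu z=K_1(z).
\]

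For the inductive step, assume $D_0^{(n)}=K_n N_0^{(n)}$. Dividing \eqref{D0} by \eqref{N0} gives
\[
\frac{D_0^{(n+1)}}{N_0^{(n+1)}}=\frac{D_0^{(n)}\big(D_0^{(n)}+\mu^{n+1}zN_0^{(n)}\big)}{N_0^{(n)}D_0^{(n)}}=\frac{D_0^{(n)}}{N_0^{(n)}}+\mu^{n+1}z .
\]
The common factor $D_0^{(n)}$ is nonzero as a polynomial, since it has constant term $1$ by induction, so the cancellation is legitimate. Substituting the hypothesis, the right-hand side becomes
\[
K_n(z)+\mu^{n+1}z=1+\Big(\sum_{k=1}^{n+1}\mu^k\Big)z=K_{n+1}(z),
\]
which closes the induction.

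The only point requiring care is to treat the identity as an identity of rational functions (or polynomials after cross-multiplying, $D_0^{(n)}=K_nN_0^{(n)}$). To avoid division altogether, we can prove the polynomial form $D_0^{(n)}=K_nN_0^{(n)}$ by induction instead:
\[
D_0^{(n+1)}=D_0^{(n)}\,N_0^{(n)}\big(K_n+\mu^{n+1}z\big)=N_0^{(n+1)}K_{n+1}.
\]
As a consistency check, this matches $f_0^{(n)}$ in \eqref{fantof0n}, since $f_0^{(n)}(z)=\mu^n z/K_n(z)$. We also note that the Möbius map composes as $1/f_0(z)=1/(\mu z)+1$, which gives the same telescoping sum, so the lemma could alternatively be derived from $f_0^{(n+1)}=f_0\circ f_0^{(n)}$. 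There is no real obstacle; the main step is just recognizing the telescoping structure $D/N\mapsto D/N+\mu^{n+1}z$.
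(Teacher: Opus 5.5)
Your proposal is correct and matches the paper's proof: both argue by induction on $n$, take the base case from the initial conditions, and get the telescoping step $D_0^{(n+1)}/N_0^{(n+1)} = D_0^{(n)}/N_0^{(n)} + \mu^{n+1}z$ by dividing the reduced recursions. Your added points are legitimate refinements over the paper's version: reading the stray $D_a^{(n)}$ in the $N_0$ recursion as $D_0^{(n)}$, noting that $D_0^{(n)}$ is a nonzero polynomial (constant term $1$) so the cancellation is valid, and giving the division-free polynomial form $D_0^{(n)} = K_n N_0^{(n)}$.
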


\begin{proof}
For $n=1$, we have
\begin{align*}
D_0^{(1)}(z)/N_0^{(1)}(z)=1+\mu z=K_1(z),
\end{align*}
so the identity \eqref{D/N=K} holds.
Assume \eqref{D/N=K} is valid for some $n$. 
Then
\begin{align*}
\frac{D_0^{(n+1)}(z)}{N_0^{(n+1)}(z)}&=
\frac{D_0^{(n)}(z)\big(D_0^{(n)}(z)+\mu^{n+1}zN_0^{(n)}(z)\big)}{N_0^{(n)}(z)D_0^{(n)}(z)}\\
&=\frac{D_0^{(n)}(z)}{N_0^{(n)}(z)}+\mu^{n+1}z=K_{n+1}(z).
\end{align*}
This completes the proof.
\end{proof}
\quad

Now we must solve $N_0^{(n)}(z)$ and $D_0^{(n)}(z)$ satisfying \eqref{N0} and \eqref{D/N=K}, but it is not difficult to check that
\begin{align}
\begin{split}
N_0^{(n+1)}(z)&=\prod_{k=1}^n\big(K^{(k)}(z)\big)^{2^{n-k}},\\
D_0^{(n+1)}(z)&=K^{(n+1)}(z)\prod_{k=1}^n\big(K^{(k)}(z)\big)^{2^{n-k}}.
\end{split}
\label{N0D0}
\end{align}
satisfy them. Substituting these expressions \eqref{N0D0} into 
\begin{align}
A_0^{(n+1)}(z)=\Big(\mu^{n+1}N_0^{(n+1)}(z)-D_0^{(n+1)}(z)\Big)/X(z) \label{A0}
\end{align}
yields
\begin{align}
A_0^{(n+1)}(z)=-\frac{1-\mu^{n+1}}{1-\mu}\prod_{k=1}^n\big(K^{(k)}(z)\big)^{2^{n-k}}.
\label{A0n+1}
\end{align}
Thus, all factors in the ratio $A_0^{(n+1)}(z)/D_0^{(n+1)}(z)$ cancel exactly, except tor the remaining factor $K^{(n+1)}(z)$ in the denominater, as expected.

\quad

Since the degree of our rational map $f_a(z)$ is 2, the degree of the $n$-th periodicity condition  is $2^n$. From this, we can count the degrees of $A_a^{(n)}(z)$ and $D_a^{(n)}(z)$ in 
\eqref{A/D} as 
\begin{align}
\mathrm{deg}A_a^{(n)}(z)=2^n-2,\qquad
\mathrm{deg}D_a^{(n)}(z)=2^n-1.
\end{align}
On the other hand, because each function $K^{(k)}(z)$ is linear in $z$, we obtain
\begin{align}
\mathrm{deg} A_0^{(n)}(z)=2^{n-1}-1,\qquad
\mathrm{deg} D_0^{(n)}(z)=2^{n-1}.
\end{align}
The difference between them arises from the residual facror $K^{(n)}(z)$ in $D_0^{(n)}(z)$. We also observe that $2^{n-1}-1$ factors disappear  from both $A_a^{(n)}(z)$ and $D_a^{(n)}(z)$ in the limit $a\to 0$, corresponding to the behaviour at $z\to\infty$.

From \eqref{N0D0} and \eqref{A0n+1}, there are the same number of degeneracy at each point of  IDP. The multiplicity of the degeneracy at 
\begin{align}
z_k^*=-\frac{1}{\sum_{m=1}^k\mu^m}
\end{align}
is $2^{n-k-1}$.

The same number of roots of $A_a^{(n)}(z)$ and $D_a^{(n)}(z)$ converge to each $z_k^*$ as $a$ approaches 0, and their motions are correlated. From the expressions of their derivatives with respect to $a$
\begin{widetext}
\begin{align}
\left.\frac{dD_a^{(n+1)}}{da}\right|_{a=0}&=\frac{D_0^{(n)}}{K^{(n)}}\left[\Big((X+2\mu)K^{(n)}+X\Big)\frac{dD_a^{(n)}}{da}+\mu zXK^{(n)}\frac{dA_a^{(n)}}{da} -\mu^{n+1}z
D_0^{(n)}\right],\\
\left.\frac{dA_a^{(n+1)}}{da}\right|_{a=0}&=\frac{D_0^{(n)}}{K^{(n)}}\left[-\Big(K^{(n)}+\frac{1-\mu^{n+1}}{1-\mu}\Big)\frac{dD_a^{(n)}}{da}+(1-X)K^{(n)}\frac{dA_a^{(n)}}{da} +\mu^n(K^{(n)}-1)\frac{D_0^{(n)}}{K^{(n)}}\right],
\end{align}
\end{widetext}
they vanish simultaneously at $a=0$. Consequently a half of the roots reach the IDP at the same moment and stop there, and annihilate simultaneouslly.

\section{Behaviour of the DML at Infinity}\label{appendix_DML_infinity}
Consider a rational map of $d\ge 2$ 
\begin{align}
f: \hat{\mathbb{C}} \to\hat{\mathbb{C}}.
\end{align}

For a point $z_0\in \hat{\mathbb{C}}$, its backward orbit is defined by
\begin{align}
O^-(z_0)\coloneqq \bigcup_{n\ge 0}f^{(-n)}(z_0).
\end{align}
If the backward orbit is finite, the point $z_0$ is called an exceptional point, that is, 
\begin{align}
z_0\in E(f)\Longleftrightarrow O^-(z_0) \mathrm{\ is\ finife\  set.}
\end{align}
For a rational map of $d\ge 2$ there are at most two exceptional points, and all exceptional points lie in the Fatou set. Hence,
\begin{align}
z_0\in J(f)\Longrightarrow z_0\notin E(f).
\end{align}

Now let us fix the map to the DLM $f_a(z)$.
The anti-Julia set $J_{\mathrm{anti}}(f_a)$ of this map is defined by \eqref{antiJulia} as the closure of the set of the following;
\begin{align}
O_a^-(\infty)=\Big\{\left. f_a^{(-n)}(\infty)\right|\ n=0,1,2,3,\cdots\Big\}.
\end{align}

Since it is well known~\cite{Beardon1991} from the definition of the Julia set that 
\begin{align}
z_0\in J(f)\Longrightarrow \overline{O^-(z_0)}=J(f),
\label{heiho-}
\end{align}
then $\overline{O_a^-(\infty)}=J(f_a)$, if $\infty\in J(f_a)$,  and therefore the anti-Julia set $J_{\mathrm{anti}}(f_a)$ coincides with the Julia set itself.
On the other hand, if $\infty \notin J(f_a)$, $z_0=\infty$ is in the Fatou set, and consequently its backward orbit $O_a^-(\infty)$ lies in the Fatou set.  Therefore it is crucial if $z_0=\infty$ is an attracting point or repelling  point. If it is attracting the pre-anti-Julia set is separated from the pre-Julia set, has no chance to collide. Moreover we can show $J(f_a)\subsetneq \partial O_a^-(\infty)$, in this case.

To see the multiplier of
$f_a(z)$ at $\infty$, we define $w=1/z$ and
\begin{align}
F(w)=\frac{1}{f_a(1/w)}=\frac{w\big(w+\mu(1-a)\big)}{\mu(w-a)}.
\end{align}
We notice that $w=0$ is a fixed point if $a\ne 0$. The multiplier is
\begin{align}
\lambda_\infty=F'(0)=1-\frac{1}{a},
\end{align}
irrespective to $\mu$. Since $a$ is real, the fixed point $z_0=\infty$ is repelling for $a<1/2$, attracting for $a>1/2$, and neutral when $a=1/2$. From this result we finally see that the anti-Julia set coincides with the Julia set when $0<a\le 1/2$.

%\bibliographystyle{apsrev4-2}
%\bibliography{julia}% Produces the bibliography via BibTeX.

%apsrev4-2.bst 2019-01-14 (MD) hand-edited version of apsrev4-1.bst
%Control: key (0)
%Control: author (72) initials jnrlst
%Control: editor formatted (1) identically to author
%Control: production of article title (-1) disabled
%Control: page (0) single
%Control: year (1) truncated
%Control: production of eprint (0) enabled
%

\end{document}